\documentclass[lettersize,journal]{IEEEtran}
\usepackage{amsmath,amsfonts,amssymb,amsthm}
\usepackage[linesnumbered,ruled,vlined]{algorithm2e}
\SetAlgoCaptionSeparator{}
\usepackage{algorithmic}
\usepackage{array}
\usepackage{xcolor}
\usepackage{textcomp}
\usepackage{stfloats}
\usepackage{url}
\usepackage{verbatim}
\usepackage{graphicx}
\usepackage{subcaption}
\usepackage{lipsum}

\usepackage{cite}
\usepackage{CJKutf8}

\usepackage[font=small]{caption}

\DeclareMathOperator{\diag}{diag}
\DeclareMathOperator*{\argmax}{arg\,max}
\DeclareMathOperator*{\argmin}{arg\,min}

\newtheorem{proposition}{Proposition}

\usepackage{hyperref}
\hypersetup{
  colorlinks=true,
  linkcolor=blue,
  filecolor=magenta,
  urlcolor=cyan,
  pdftitle={PLIM-PR-AFDM Paper},
}

\begin{document}

%
\title{\huge{Index Modulation Aided Phase-Rotated AFDM for Efficient Integrated Sensing and Communication: Framework, Optimization and Performance Analysis}}

\author{Shiqi Cui, Zeping Sui, \textit{Member, IEEE}, Tianqi Mao, \textit{Member, IEEE}, Yuanshuo Gang,\\ Guangyao Liu, \textit{Member, IEEE}, Fan Zhang, Miaowen Wen, \textit{Senior Member, IEEE}, Dezhi Zheng, \\  Christos Masouros, \textit{Fellow, IEEE}, and Zhaocheng Wang, \textit{Fellow, IEEE}

  \thanks{This work was supported in part by the Beijing Nova Program under Grant number 202604841146 and National Natural Science Foundation of China under Grant 62401054. \emph{(Corresponding author: Tianqi Mao.)}}
  \thanks{S. Cui, T. Mao, Y. Gang, G. Liu and D. Zheng are with State Key Laboratory of Environment Characteristics and Effects for Near-space, Beijing Institute of Technology, Beijing 100081, China (e-mails: \{cuisqqq, maotq, gangys, liu\_gy, zhengdezhi\}@bit.edu.cn).

    Z. Sui is with the School of Computer Science and Electronic Engineering, University of Essex, CO4 3SQ Colchester, U.K. (e-mail: zepingsui@outlook.com).

    F. Zhang and Z. Wang are with Department of Electronic Engineering, Tsinghua University, Beijing 100084, China. Z. Wang is also with Tsinghua Shenzhen International Graduate School, Shenzhen 518055, China (e-mails: zf22@mails.tsinghua.edu.cn, zcwang@tsinghua.edu.cn).

    M. Wen is with the School of Electronic and Information Engineering, South China University of Technology, Guangzhou 510640, China (e-mail: eemwwen@scut.edu.cn).

    Christos Masouros is with the Department of Electronic and Electrical Engineering, University College London, Torrington Place, London, WC1E 7JE, UK (e-mail: c.masouros@ucl.ac.uk).

}}

\markboth{Journal of \LaTeX\ Class Files,~Vol.~14, No.~8, August~2021}%
{Shell \MakeLowercase{\textit{et al.}}: A Sample Article Using IEEEtran.cls for IEEE Journals}


\IEEEaftertitletext{\vspace{-6mm}}
\maketitle
\begin{abstract}
  Affine frequency division multiplexing (AFDM) has emerged as a competitive waveform for integrated sensing and communication (ISAC) in high-mobility scenarios, but its data-dependent high peak-to-average power ratio (PAPR) and ambiguity function (AF) sidelobes may degrade the power-amplifier efficiency as well as the sensing performance.  To mitigate the issues, one promising candidate is the phase-rotation optimization philosophy, which necessitates auxiliary side information (SI) to guarantee successful data recovery at the receiver. However, using additional signalling overhead to deliver the SI degrades the spectral efficiency. Against the background, this paper proposes a power level index modulation (PLIM)-aided phase-rotated AFDM (PLIM-PR-AFDM) scheme, which embeds the phase-rotation SI within the transmitted block. At the transmitter, the proposed scheme jointly optimizes the waveform by coordinating the PLIM power allocation, phase rotation, and embedding of SI. Specifically, subcarriers are flexibly allocated with different transmit power levels. High-power subcarriers are phase-rotated to suppress the PAPR and fast-slow-time AF sidelobes. By decomposing the fast-slow-time AF sidelobe energy into per-block periodic autocorrelation function (PACF) terms, the sensing objective can be formulated as a PACF-weighted integrated sidelobe level (WISL) minimization problem, while selected low-power subcarriers embed the phase bits as SI through constellation labels. Furthermore, the PLIM power patterns of each transmitted block can convey additional index bits, which compensates for the spectral efficiency loss caused by SI transmission. Moreover, the error performance is analyzed by deriving an upper bound of the average bit error probability (ABEP) and identifying the condition for full diversity. To support practical implementation, a low-complexity yet high-performance receiver, namely residual-aided PLIM approximate message passing, is further developed to jointly recover the index bits, data symbol bits, and phase bits. Simulation results demonstrate that the proposed scheme effectively reduces PAPR, suppresses fast-slow-time AF/PACF sidelobes, and achieves superior error performance compared with conventional solutions.
\end{abstract}

\begin{IEEEkeywords}
  Affine frequency division multiplexing (AFDM), approximate message passing, integrated sensing and communication (ISAC), peak-to-average power ratio (PAPR), phase rotation, power level index modulation.
\end{IEEEkeywords}
\vspace{-4mm}
\section{Introduction}


Integrated sensing and communication (ISAC) has emerged as a key enabling technology for future wireless networks, where advanced waveforms are expected to support high-rate communications and accurate sensing simultaneously~\cite{ISAC_Survey}. One promising ISAC waveform candidate is affine frequency division multiplexing (AFDM), owing to its robustness against doubly dispersive channels~\cite{AFDM}. Recent studies have explored AFDM from various perspectives, including pilot design~\cite{AFDM_Fan}, pulse-shaping filter design~\cite{AFDM_AF_FANLIU, AFBM}, waveform design~\cite{afdm_isac, AFDM_AF, DPAF_TIT, AFDM_ISAC_TIM, Frame_AFDM_ISAC}, and receiver design~\cite{afdm_joint_est, Multi_path_AFDM, AFDM_FMCW}. Nevertheless, such a promising waveform still faces non-negligible challenges in practical ISAC applications. In particular, the inherent high peak-to-average power ratio (PAPR) degrades the efficiency of nonlinear power amplifiers (PAs)~\cite{PAPR_survey}. Furthermore, the sensing performance of conventional AFDM is usually insufficiently stable, as the sidelobes of the ambiguity function (AF) are influenced by both the chirping parameters and random communication symbols~\cite{CP-OFDM, Sensing_with_commun, Iceberg}. 

To address the PAPR and AF issues, the joint optimization constitutes a real-time waveform design problem within each data block, rather than an offline procedure solely based on long-term signal statistics~\cite{PAPR_survey, Bazzi2023TunablePAPR}. In this context, a growing body of literature has investigated PAPR reduction for AFDM waveforms. Specifically, a grouped pre-chirp selection (GPS) scheme was proposed to vary the pre-chirp parameters across subcarrier groups, enabling the selection of the candidate signal with the lowest PAPR~\cite{AFDM_GPS}. Furthermore, a per-slot scalar chirp-offset selection method was developed to reduce the PAPR while inherently embedding side information (SI) to facilitate low-complexity blind recovery~\cite{choi_AFDM_PAPR}. To mitigate the PAPR without requiring additional SI transmission, optimal pilot positioning and selective pre-chirp-assisted active constellation extension were jointly designed~\cite{PAPR_AFDM_TVT}. More recently, a neural network aided AFDM transceiver optimization framework was proposed to jointly optimize constellation shaping, precoding, and detection, achieving further PAPR reduction through end-to-end waveform learning~\cite{AFDM_NN_PAPR}. These methods can exploit the degrees of freedom (DoF) in the discrete affine Fourier (DAF) domain to reshape the time-domain envelope while preserving desirable communication performance of AFDM. 


In contrast, existing efforts to optimize the AF for AFDM-ISAC primarily focus on statistical characteristics. Specifically, closed-form average-squared AFs were derived for random AFDM signals with and without pulse shaping, showing that the unshaped AF depends on post-chirp parameter and the symbol kurtosis~\cite{AFDM_AF_FANLIU}. Moreover, the continuous-time AFs of AFDM chirp subcarriers and frames were characterized, showing that parameter tuning and guard-symbol insertion support unambiguous and interference-free sensing, respectively~\cite{AFDM_AF}. More generally, a unified statistical framework for the discrete periodic AF and fast-slow-time AF (FST-AF) of arbitrary orthonormal random communication waveforms was proposed in~\cite{DPAF_TIT}, where the AF properties of several communication waveforms were analyzed. However, the aforementioned waveform designs based on the statistical characteristics of signals are not necessarily optimal under instantaneous conditions. Based on this insight, phase-rotated AFDM, which is equivalent to adjusting the pre-chirp parameter, emerges as a promising solution. This approach generally falls into two categories: SI-free and SI-assisted designs. SI-free designs restrict the phase-rotation magnitude to avoid receiver distortion, which inevitably degrades demodulation performance and hinders high-order modulation~\cite{Adaptive_c2}. Alternatively, SI-assisted designs typically assume perfect SI knowledge at the receiver or rely on a finite shared codebook, either reducing spectral efficiency via signaling overhead or limiting the optimization gain due to the restricted codebook size~\cite{AFDM_GPS}. These drawbacks motivate a more efficient SI embedding mechanism.



As a parallel development, index modulation (IM) presents an attractive approach for implicit information embedding. Existing AFDM-IM schemes include subcarrier-activation IM~\cite{AFDM_IM_WCL, AFDM_IM_JSAC}, pre-chirp-domain IM~\cite{AFDM_PIM_TWC, AFDM_MM_IM}, and constellation-mode IM~\cite{AFDM_MM_IM, AFDM_Dual_IM}. However, these methods are often not well matched to phase-rotated AFDM systems, since subcarrier-activation IM typically introduces spurious peaks in the AF~\cite{PLIM}, while pre-chirp-domain and constellation-mode IM degrade the optimization performance of phase-rotated AFDM systems. To address the above-mentioned limitations, we incorporate power level index modulation (PLIM) into the phase-rotated AFDM system~\cite{PLIM, DM_OFDM_CP, Wen_PLIM}. PLIM is a generalized form of subcarrier activation IM that maps index bits onto combinations of subcarrier power levels rather than simply distinguishing between active and inactive subcarriers. Since low-power subcarriers remain active during transmission, PLIM avoids the AF distortions caused by zero-power subcarriers in subcarrier-activation IM schemes, making it suitable for ISAC waveform design~\cite{PLIM}.

Motivated by the aforementioned considerations, this paper presents a PLIM-aided phase-rotated AFDM (PLIM-PR-AFDM) framework for ISAC. Within the proposed framework, each PLIM power pattern carries index bits, while high-power subcarriers provide the DoF for phase rotation to jointly optimize frame-wise PAPR and sensing sidelobes. Specifically, the sensing design targets the weighted integrated sidelobe level (WISL) of the FST-AF over a coherent processing interval. This objective can be further converted into a per-block periodic autocorrelation function (PACF)-WISL minimization problem, which enables independent frame-wise optimization with low complexity. Additionally, selected low-power subcarriers embed the resulting phase bits as internal SI. The receiver can therefore recover the index bits, phase bits, and data symbols from the same transmitted block without an external SI channel. The main contributions of this paper are summarized as follows.
\begin{itemize}
  \item A PLIM-PR-AFDM framework is proposed for ISAC. In each PLIM group, the PLIM power pattern conveys extra index bits. Moreover, high-power subcarriers provide phase-rotation DoF, and selected low-power subcarriers embed the resulting phase bits as internal SI, thereby avoiding an external SI channel or a restrictive finite phase-codebook constraint.
  \item A frame-wise phase-domain waveform optimization framework is developed to jointly suppress PAPR and the sidelobes of the FST-AF, where the sidelobe level associated with FST-AF can be represented by that of the per-block PACF. Subsequently, the dual-function objectives are jointly optimized subject to the PLIM power-pattern constraints. To balance optimization performance and complexity, both a block coordinate descent (BCD) algorithm and a gradient-guided dual-pruned BCD (GDP-BCD) algorithm are proposed.
  \item The error performance of the proposed system is analytically characterized. An upper bound on the average bit error probability (ABEP) is derived from pairwise error events, and a full-diversity condition involving PLIM power patterns, phase rotations, doubly dispersive channels, and PA nonlinearity is established.
  \item A practical residual-aided approximate message passing (RA-AMP) receiver is designed to jointly recover the index bits, data symbol bits, and phase bits. Our RA-AMP combines a minimum mean square error (MMSE) warm start, damped residual pseudo-observations, factorized PLIM group MMSE denoising, and final group MAP detection. Simulation results verify the effectiveness of our proposed scheme in PAPR reduction and FST-AF/PACF sidelobe suppression, with good BER performance.
\end{itemize}

\textit{Notations:} Lower-case, bold lower-case, and bold upper-case letters denote scalars, vectors, and matrices, respectively. $(\cdot)^{T}$, $(\cdot)^{H}$, and $(\cdot)^*$ denote transpose, Hermitian transpose, and conjugation, respectively. $|\mathcal A|$, $\|\mathbf a\|$, and $\diag(\mathbf a)$ denote set cardinality, Euclidean norm, and a diagonal matrix with diagonal $\mathbf a$, respectively. $\mathbb C^{M\times N}$ is the set of $M\times N$ complex matrices; $\mathbf I_N$ and $\mathbf{0}$ denote the $N\times N$ identity matrix and a zero vector of appropriate size. $\ker(\cdot)$ and $\mathbb E\{\cdot\}$ denote null space and expectation. $\mathcal{CN}(\boldsymbol\mu,\mathbf C)$ denotes a complex Gaussian distribution with mean $\boldsymbol\mu$ and covariance $\mathbf C$. $j=\sqrt{-1}$ is the imaginary unit, $(x)_N=x\bmod N$, and $\mathbb Z_N=\{0,1,\ldots,N-1\}$. The Kronecker delta $\delta_{m,n}$ equals $1$ if $m=n$ and $0$ otherwise. For positive $f,g$, $f(t)=\Theta(g(t))$ means $c_1g(t)\le f(t)\le c_2g(t)$ for constants $c_1,c_2>0$ and sufficiently large $t$.

\begin{figure*}[t]
  \centering
  \includegraphics[width=0.86\textwidth]{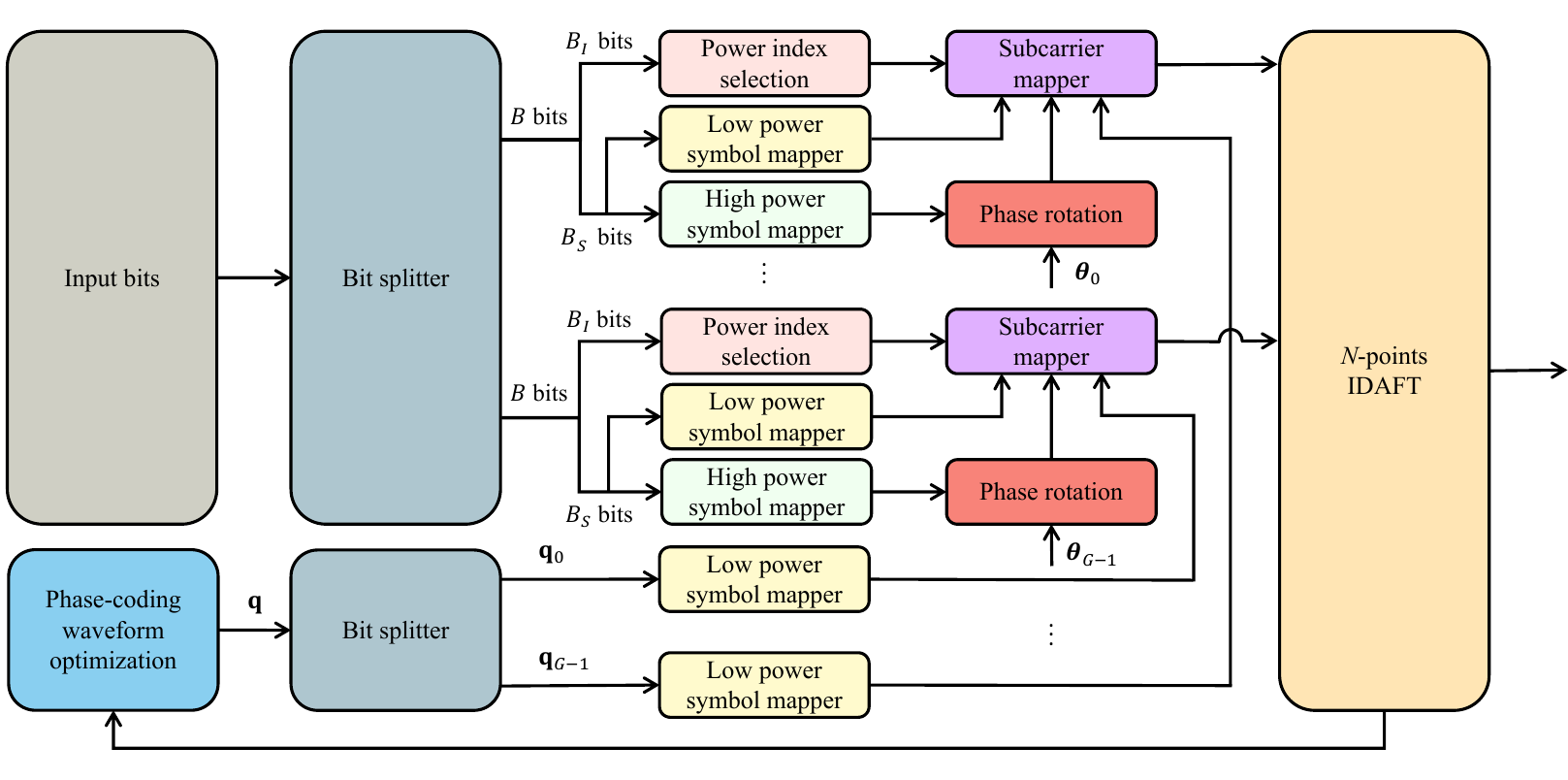}
  \caption{Block diagram of the proposed PLIM-PR-AFDM transmitter.}
  \label{fig:transmitter_model}
  \vspace{-4mm}
\end{figure*}
\vspace{-2mm}
\section{System Model}
\label{sec:system_model}
\subsection{Proposed PLIM-PR-AFDM}
As shown in Fig.~\ref{fig:transmitter_model}, the PLIM-PR-AFDM transmitter performs PLIM power assignment, phase rotation, and in-band SI embedding. The $N$ subcarriers are divided into $G=N/U$ disjoint groups of size $U$, and the subcarrier index set of the $g$-th group is written as
\begin{equation}
  \mathcal G_g=\{gU,gU+1,\ldots,gU+U-1\}, \quad g=0,\ldots,G-1.
\end{equation}
The input bit sequence $\mathbf b$ is evenly partitioned into $G$ groups of $B$ bits. The $g$-th bit group contains $B_{\rm I}$ index bits and $B_{\rm S}=B-B_{\rm I}$ data symbol bits. The index bits select one of $2^{B_{\rm I}}$ usable PLIM power patterns, where $B_{\rm I}=\bigl\lfloor\log_2\binom{U}{K_{\rm H}}\bigr\rfloor$. Let $\ell_g\in\{0,1,\ldots,2^{B_{\rm I}}-1\}$ identify the selected PLIM power pattern in group $g$. This pattern is denoted by $\mathcal P_g(\ell_g)\triangleq(\mathcal H_g,\mathcal L_g)$, where $\mathcal H_g\subseteq\mathcal G_g$ and $\mathcal L_g=\mathcal G_g\setminus\mathcal H_g$ are the high- and low-power subsets with $|\mathcal H_g|=K_{\rm H}$ and $|\mathcal L_g|=K_{\rm L}$, respectively. Thus, $K_{\rm H}$ subcarriers in each group are assigned a high power level, whereas the remaining $K_{\rm L}=U-K_{\rm H}$ subcarriers are assigned a low power level. Collecting the group indices as $\boldsymbol\ell=[\ell_0,\ldots,\ell_{G-1}]^{\rm T}$, the global high- and low-power sets are $\mathcal H=\bigcup_{g=0}^{G-1}\mathcal H_g$ and $\mathcal L=\bigcup_{g=0}^{G-1}\mathcal L_g$. By setting the high-to-low power ratio to $\rho=\alpha_{\rm H}/\alpha_{\rm L}>1$ and normalizing the average power per subcarrier to unity, the coefficients of high and low power levels, denoted as $\alpha_{\rm L}$ and $\alpha_{\rm H}$, can be derived as
\begin{equation}
  \alpha_{\rm L}=\frac{U}{K_{\rm H}\rho+K_{\rm L}},
  \qquad
  \alpha_{\rm H}=\rho\alpha_{\rm L}.
\end{equation}

Since high-power subcarriers make a larger weighted contribution to the waveform properties, phase rotation is applied exclusively to them. Specifically, write $\mathcal H_g=\{m_{g,0}^{\rm H},\ldots,m_{g,K_{\rm H}-1}^{\rm H}\}$, where $m_{g,k}^{\rm H}$ denotes the index of the $k$-th high-power subcarrier in group $g$. Each such subcarrier is associated with one binary phase bit $q_{g,k}\in\{0,1\}$, based on which it undergoes the phase rotation
\begin{equation}
  \theta_{g,k}=(2q_{g,k}-1)\theta_{\rm p},
  \quad k=0,\ldots,K_{\rm H}-1,
\end{equation}
where $\theta_{\rm p}$ is the phase-rotation magnitude. We define the phase-bit sequence and the corresponding high-power phase-rotation vector for group $g$ as $\mathbf q_g\triangleq[q_{g,0},\ldots,q_{g,K_{\rm H}-1}]^{\rm T}$ and $\boldsymbol\theta_g\triangleq[\theta_{g,0},\ldots,\theta_{g,K_{\rm H}-1}]^{\rm T}$, respectively. The block-level phase-bit vector is $\mathbf q\triangleq[\mathbf q_0^{\rm T},\ldots,\mathbf q_{G-1}^{\rm T}]^{\rm T}$.

For SI transmission, the phase-bit vector $\mathbf q$ is embedded in-band rather than conveyed through a separate channel. Specifically, dual constellation alphabets $\mathcal X_{\rm H}$ and $\mathcal X_{\rm L}$ are employed for the high- and low-power subcarriers, respectively. For $i\in\{\mathrm H,\mathrm L\}$, $\mathcal X_i$ has cardinality $M_i$ and bit depth $B_i=\log_2M_i$. In each group $g$, the $K_{\rm H}$ phase bits in $\mathbf q_g$ are first packed into $R_{\rm SI}=\lceil K_{\rm H}/B_{\rm L}\rceil$ predetermined low-power subcarriers. This construction uses a total of $N_{\rm SI}=GR_{\rm SI}$ low-power SI symbols per AFDM block.\footnote{To embed all phase bits within each group, the system parameters are selected such that $R_{\rm SI}\le K_{\rm L}$.} Apart from the $K_{\rm H}$ constellation-label positions occupied by phase bits, all high- and low-power constellation labels convey data symbol bits. Hence, the number of information bits per group is
\begin{equation}
  \label{eq:user_bits_per_group}
  B=B_{\rm I}+B_{\rm S}
  =B_{\rm I}+K_{\rm H}B_{\rm H}+K_{\rm L}B_{\rm L}-K_{\rm H},
  \quad \text{bits/group}.
\end{equation}


\begin{table}[!t]
  \renewcommand{\arraystretch}{1.3}
  \captionsetup{font=footnotesize,labelfont=normalfont,textfont=sc,
    labelsep=newline,justification=centering,singlelinecheck=false}
  \caption{Mapping from the Group Index $\ell_g$ to the In-Group PLIM Power Pattern for $U=4$ and $K_{\rm H}=2$}
  \label{tab:plim_patterns}
  \centering
  \footnotesize
  \begin{tabular}{|c|c|c|c|}
    \hline
    $\ell_g$ & $\mathcal H_g$ & $\mathcal L_g$ & Power pattern\\
    \hline
    $0$ & $\{gU,\,gU+1\}$ & $\{gU+2,\,gU+3\}$ & (H, H, L, L) \\
    \hline
    $1$ & $\{gU,\,gU+2\}$ & $\{gU+1,\,gU+3\}$ & (H, L, H, L) \\
    \hline
    $2$ & $\{gU,\,gU+3\}$ & $\{gU+1,\,gU+2\}$ & (H, L, L, H) \\
    \hline
    $3$ & $\{gU+1,\,gU+2\}$ & $\{gU,\,gU+3\}$ & (L, H, H, L) \\
    \hline
  \end{tabular}
\end{table}

\noindent
\textbf{Example:} Assume $U=4$, $K_{\rm H}=2$, $K_{\rm L}=2$, $M_{\rm H}=16$, and $M_{\rm L}=4$. Since $B_{\rm I}=\bigl\lfloor\log_2 \binom{U}{K_{\rm H}}\bigr\rfloor=2$, $2^{B_{\rm I}}=4$ power patterns are used. Ordering the candidate high-power index sets lexicographically and retaining the first $2^{B_I}$ sets yields the one-to-one mapping in Table~\ref{tab:plim_patterns}. 
Each group consists of two high-power subcarriers for 16 quadrature amplitude modulation (QAM) data-symbol transmission and phase rotation, together with two low-power quadrature phase-shift keying (QPSK) subcarriers, one of which conveys 2-bit SI, while the other conveys 2-bit data symbols. Including the $B_{\rm I}=2$ index bits, \eqref{eq:user_bits_per_group} gives $B=12$ information bits per group.

Having specified the PLIM power pattern, phase rotation, and SI embedding, we now construct the time-domain signal. Let the mapped constellation vector be $\mathbf a=[a_0,\ldots,a_{N-1}]^{\rm T}$, where $a_m\in\mathcal X_{\rm H}/\mathcal X_{\rm L}$ for $m\in\mathcal H/\mathcal L$. The subcarrier-dependent power coefficient and phase rotation are compactly defined as
\begin{equation}
  \bigl(\alpha_m(\boldsymbol\ell),\theta_m(\mathbf q)\bigr)=
  \begin{cases}
    \bigl(\alpha_{\rm H},(2q_{g,k}-1)\theta_{\rm p}\bigr),
      & m\in\mathcal H,\\
    \bigl(\alpha_{\rm L},0\bigr),
      & m\in\mathcal L.
  \end{cases}
\end{equation}
Conditioned on $\mathbf a$, whose argument is suppressed for brevity, the $m$-th DAFT-domain symbol is
\begin{equation}
  x_m(\boldsymbol\ell,\mathbf q)=
  \sqrt{\alpha_m(\boldsymbol\ell)}\,a_m e^{j\theta_m(\mathbf q)},
  \quad m=0,\ldots,N-1.
\end{equation}
By collecting these symbols, the DAFT-domain vector is $\mathbf x(\boldsymbol\ell,\mathbf q)=[x_0(\boldsymbol\ell,\mathbf q),\ldots,x_{N-1}(\boldsymbol\ell,\mathbf q)]^{\rm T}$. The transmitted AFDM block is then given by
\begin{equation}
  \label{eq:tx_time_block}
  \mathbf s(\boldsymbol\ell,\mathbf q)=\mathbf A^{H}\mathbf x(\boldsymbol\ell,\mathbf q).
\end{equation}
where $\mathbf A^{H}=\boldsymbol\Lambda_{c_1}^{H}\mathbf F^{H}\boldsymbol\Lambda_{c_2}^{H}$, $[\mathbf F]_{m,n}=N^{-1/2}e^{-j2\pi mn/N}$, and $\boldsymbol\Lambda_c=\diag(e^{-j2\pi c n^2})_{n=0}^{N-1}$. Here, $c_1$ and $c_2$ are the post-chirp and pre-chirp parameters, respectively. Suppressing the dependence on $(\boldsymbol\ell,\mathbf q)$ for brevity, the $n$-th sample of \eqref{eq:tx_time_block} is
\begin{equation}
  \begin{split}
    s[n]&=\frac{1}{\sqrt N}e^{j2\pi c_1n^2}
    \sum_{m=0}^{N-1}\sqrt{\alpha_m(\boldsymbol\ell)}a_m \\
    &\quad\times e^{j\theta_m(\mathbf q)}e^{j2\pi c_2m^2}e^{j2\pi mn/N}.
  \end{split}
\end{equation}

\subsection{Channel Model and Receiver}
The proposed PLIM-PR-AFDM waveform is emitted for simultaneous communication and sensing. The corresponding channel models and receiver processing are described separately below.

\subsubsection{Communication Branch}
After insertion of the chirp-periodic prefix (CPP), the modulated signal propagates through a $P$-path doubly dispersive channel with impulse response
\begin{equation}
  h(\tau,\nu)=
  \sum_{p=1}^{P}
  h_p
  \delta(\tau-\tau_p)
  \delta(\nu-\nu_p),
\end{equation}
where $h_p\in\mathbb C$, $\tau_p\in\mathbb Z_N$, and $\nu_p=f_{{\rm D},p}/\Delta f$ are the complex gain, discrete delay index, and Doppler shift normalized by the subcarrier spacing $\Delta f$, respectively. The channel-gain vector $\mathbf h=[h_1,\ldots,h_P]^{\rm T}$ follows
\begingroup
\setlength{\abovedisplayskip}{1pt}
\setlength{\belowdisplayskip}{1pt}
\setlength{\abovedisplayshortskip}{0pt}
\setlength{\belowdisplayshortskip}{1pt}
\begin{equation}
  \label{eq:rayleigh_fading}
  \mathbf h\sim\mathcal{CN}\left(\mathbf{0},\frac{1}{P}\mathbf I_P\right).
\end{equation}
\endgroup
According to~\cite{AFDM}, full diversity is achieved by selecting the post-chirp parameters as
\begin{equation}
  \label{eq:c1_condition}
  c_1 = \frac{2\left( \nu_{\rm max} + k_\nu \right) + 1}{2N},
\end{equation}
and $c_2\in\mathbb R\setminus\mathbb Q$ being irrational, where $k_\nu\in\mathbb Z_{\geq 0}$ is a guard integer for fractional-Doppler leakage, selected such that
\begin{equation}
  2\left(\nu_{\rm max}+k_\nu\right)\left(\tau_{\rm max}+1\right)
  +\tau_{\rm max}<N.
  \label{eq:afdm_guard_condition}
\end{equation}
with $\nu_{\rm max}$ and $\tau_{\rm max}$ being the maximum Doppler and delay indices, respectively.

At the receiver, after CPP removal, the received time-domain sample vector $\mathbf r=[r[0],\ldots,r[N-1]]^{\rm T}$ is given by
\begingroup
\setlength{\abovedisplayskip}{1pt}
\setlength{\belowdisplayskip}{1pt}
\setlength{\abovedisplayshortskip}{0pt}
\setlength{\belowdisplayshortskip}{1pt}
\begin{equation}
  \mathbf r=
  \sum_{p=1}^{P}
  h_p
  \mathbf H_{t,p}
  \mathbf s+
  \mathbf n_t,
\end{equation}
\endgroup
where $\mathbf H_{t,p}=\boldsymbol\Gamma_{{\rm CPP},p}\boldsymbol\Delta_{\nu_p}\boldsymbol\Pi^{\tau_p}$. The matrices $\boldsymbol\Pi^{\tau_p}$, $\boldsymbol\Delta_{\nu_p}=\diag(e^{j2\pi\nu_p n/N})_{n=0}^{N-1}$, and $\boldsymbol\Gamma_{{\rm CPP},p}$ model cyclic delay, Doppler modulation, and the CPP-induced phase, respectively, while $\mathbf n_t$ is the additive white Gaussian noise (AWGN) vector, following $\mathcal{CN}(\mathbf 0,N_0\mathbf I_N)$. Subsequently, applying the DAFT gives
\begingroup
\setlength{\abovedisplayskip}{1pt}
\setlength{\belowdisplayskip}{1pt}
\setlength{\abovedisplayshortskip}{0pt}
\setlength{\belowdisplayshortskip}{1pt}
\begin{equation}
  \mathbf y=
  \mathbf A\mathbf r=
  \sum_{p=1}^{P}
  h_p
  \mathbf A
  \mathbf H_{t,p}
  \mathbf A^{H}
  \mathbf x+
  \mathbf w,
\end{equation}
\endgroup
where $\mathbf w\triangleq\mathbf A\mathbf n_t$. Define the DAFT-domain channel matrix of the $p$-th path as $\mathbf H_p\triangleq\mathbf A\mathbf H_{t,p}\mathbf A^{H}$ and the effective channel as $\mathbf H_{\rm eff}\triangleq\sum_{p=1}^{P}h_p\mathbf H_p$. The input-output relation can be expressed as
\begingroup
\setlength{\abovedisplayskip}{1pt}
\setlength{\belowdisplayskip}{1pt}
\setlength{\abovedisplayshortskip}{0pt}
\setlength{\belowdisplayshortskip}{1pt}
\begin{equation}
  \label{eq:comm_daft_model}
  \mathbf y=
  \mathbf H_{\rm eff}
  \mathbf x+
  \mathbf w.
\end{equation}
\endgroup

Based on \eqref{eq:comm_daft_model}, the PLIM power-pattern index vector $\boldsymbol\ell$, embedded phase-bit vector $\mathbf q$, and data-symbol vector $\mathbf a$ can be jointly recovered from $\mathbf y$ by maximum-likelihood (ML) detection. Its search complexity, however, grows excessively with the constellation orders and the number of candidate PLIM power patterns. To mitigate this complexity, Section~\ref{sec:receiver_design} develops the RA-AMP detector for practical implementation.

\subsubsection{Sensing Branch}
For sensing, a monostatic ISAC receiver coherently processes $M_{\rm s}$ consecutive AFDM blocks with matched filtering for target detection. Denote the $u$-th transmitted block by $\mathbf s_u$. The waveform over one coherent processing interval (CPI) is stacked as
$\mathbf s_{\rm CPI}
\triangleq
[
\mathbf s_0^{\rm T},\cdots,\mathbf s_{M_{\rm s}-1}^{\rm T}
]^{\rm T}
\in\mathbb C^{NM_{\rm s}}$.
Under the fast-slow-time (FST) approximation~\cite{DPAF_TIT}, the Doppler phase shifts within each block is neglected, whereas those across successive blocks are retained. For target $i=1,\ldots,Q$, the reflection coefficient, delay-bin index, and slow-time Doppler-bin index are denoted by $\beta_i\in\mathbb C$, $d_i\in\mathbb Z_N$, and $\iota_i\in\mathbb Z_{M_{\rm s}}$, respectively. After CPP removal, the sensing echo is given by
\begingroup
\setlength{\abovedisplayskip}{1pt}
\setlength{\belowdisplayskip}{1pt}
\setlength{\abovedisplayshortskip}{0pt}
\setlength{\belowdisplayshortskip}{1pt}
\begin{equation}
  \mathbf r_{\rm sen}
  =
  \sum_{i=1}^{Q}
  \beta_i
  \left(
    \mathbf D_{M_{\rm s},\iota_i}
    \otimes
    \mathbf J_{N,d_i}
  \right)
  \mathbf s_{\rm CPI}
  +
  \mathbf n_{\rm sen},
  \label{eq:fst_echo_model}
\end{equation}
\endgroup
where $\mathbf n_{\rm sen}\sim\mathcal{CN}(\mathbf 0,\sigma_{\rm sen}^{2}\mathbf I_{NM_{\rm s}})$ is sensing noise. The inter-block Doppler and periodic delay operators in \eqref{eq:fst_echo_model} are, respectively,
$\mathbf D_{M_{\rm s},\iota}
\triangleq
\diag\!\left(e^{j2\pi \iota u/M_{\rm s}}\right)_{u=0}^{M_{\rm s}-1}$,
$\iota\in\mathbb Z_{M_{\rm s}}$, and
$[\mathbf J_{N,k}\mathbf v]_n
\triangleq v[(n-k)_N]$,
$k,n\in\mathbb Z_N$,
for any $\mathbf v\in\mathbb C^N$.

With matched-filter processing, the FST-AF characterizes the delay-Doppler sensing performance of the transmitted CPI waveform and is defined as~\cite{DPAF_TIT}
\begin{equation}
  \mathcal{A}_{\rm FST}(k,\iota)
  \triangleq
  \mathbf s_{\rm CPI}^H
  \left(
    \mathbf D_{M_{\rm s},\iota}^{*}
    \otimes
    \mathbf J_{N,k}^{\rm T}
  \right)
  \mathbf s_{\rm CPI}.
  \label{eq:fst_af_def}
\end{equation}
To connect this sensing model to block-wise waveform design, define the PACF of block $u$ as $r_{u}[k]\triangleq\mathbf s_u^H\mathbf J_{N,k}^{\rm T}\mathbf s_u$, $k\in\mathbb Z_N$. From the block-diagonal structure of $\mathbf D_{M_{\rm s},\iota}^{*}\otimes\mathbf J_{N,k}^{\rm T}$, we have
\begin{align}
  \mathcal{A}_{\rm FST}(k,\iota)
  &=
  \sum_{u=0}^{M_{\rm s}-1}
  r_{u}[k]
  e^{-j2\pi \iota u/M_{\rm s}} .
  \label{eq:fst_as_dft_pacf}
\end{align}
According to Parseval's Theorem, the sidelobe energy of FST-AF over $\mathcal K\triangleq\mathbb Z_N\setminus\{0\}$ satisfies
\begin{equation}
  \sum_{k\in\mathcal K}
  \eta_k
  \sum_{\iota=0}^{M_{\rm s}-1}
  |\mathcal{A}_{\rm FST}(k,\iota)|^2
  =
  M_{\rm s}
  \sum_{u=0}^{M_{\rm s}-1}
  \sum_{k\in\mathcal K}
  \eta_k
  |r_u[k]|^2 ,
  \label{eq:weighted_fst_pacf_relation}
\end{equation}
where each $k\in\mathcal K$ is assigned a weight $\eta_k\geq0$. The right-hand side is $M_{\rm s}$ times the sum of the per-block weighted integrated sidelobe level of the PACF. Therefore, minimizing the PACF-WISL of each block independently reduces the total Doppler-integrated FST-AF sidelobe energy over the CPI, without jointly optimizing all $M_{\rm s}$ blocks.

\section{Proposed ISAC Waveform Optimization}
This section jointly optimizes the PAPR and the FST-AF sidelobe level within the proposed PLIM-PR-AFDM framework. Based on \eqref{eq:weighted_fst_pacf_relation}, the FST-AF sidelobe level over a CPI can be expressed as a linear combination of the per-block PACF sidelobe energies. Therefore, the original mission can be transferred as a finite-alphabet optimization problem aiming at jointly minimizing the PAPR and PACF-WISL for each AFDM block.
\vspace{-6mm}
\subsection{Problem Formulation}
Consider one arbitrary AFDM block (block index omitted for brevity), where $\mathbf \ell$ and the transmitted data symbols are fixed. In the proposed approach, the optimization variable is the block-level phase-bit vector $\mathbf q=[\mathbf q_0^{\rm T},\ldots,\mathbf q_{G-1}^{\rm T}]^{\rm T}$ defined in Section~\ref{sec:system_model}. The group-level phase-bit vector $\mathbf q_g$ belongs to the finite alphabet
\begin{equation}
  \mathbf q_g\in\mathcal Q\triangleq\{0,1\}^{K_{\rm H}},
  \qquad |\mathcal Q|=N_{\rm q}=2^{K_{\rm H}}.
  \label{eq:group_phase_bit_alphabet}
\end{equation}
Each $\mathbf q_g$ determines the rotation angles of the corresponding high-power subcarriers, i.e., $\boldsymbol\theta_g=(2\mathbf q_g-\mathbf 1)\theta_{\rm p}$, and its entries are embedded in the low-power subcarriers as SI. For brevity, we denote $\mathbf x(\mathbf q)\triangleq\mathbf x(\boldsymbol\ell,\mathbf q)$ and $\mathbf s(\mathbf q)\triangleq\mathbf A^H\mathbf x(\mathbf q)$ as the transmitted DAFT- and time-domain waveforms, as a function of the associated SI. On the one hand, the PAPR of the AFDM block is defined as
\begin{equation}
  J_{\rm PAPR}(\mathbf q)
  =
  \frac{
    N\|\mathbf s(\mathbf q)\|_{\infty}^{2}
  }{
    \|\mathbf s(\mathbf q)\|_{2}^{2}
  }.
  \label{eq:papr_obj}
\end{equation}
On the other hand, according to \eqref{eq:weighted_fst_pacf_relation}, the PACF-WISL of the AFDM block can be written as
\begin{equation}
  J_{\rm WISL}(\mathbf q)
  =
  \sum_{k\in\mathcal{K}}
  \eta_k
  \left|r^{(\mathbf q)}[k]\right|^2.
  \label{eq:pacf_wisl_obj}
\end{equation}
Here, $r^{(\mathbf q)}[k]$ denotes the per-block PACF sample $r_u[k]$ in \eqref{eq:weighted_fst_pacf_relation}, evaluated at $\mathbf s(\mathbf q)$ with the block index $u$ suppressed.

To balance the two metrics on a common scale, a normalized joint objective is established as
\begin{equation}
  J(\mathbf q)
  =
  \lambda
  \frac{
    J_{\rm PAPR}(\mathbf q)
  }{
    J_{\rm PAPR}^{\rm ref}
  }
  +
  (1-\lambda)
  \frac{
    J_{\rm WISL}(\mathbf q)
  }{
    J_{\rm WISL}^{\rm ref}
  },
  \label{eq:joint_papr_pacf_obj}
\end{equation}
where $\lambda\in[0,1]$ controls the communication-sensing tradeoff. Besides, the constants $J_{\rm PAPR}^{\rm ref}$ and $J_{\rm WISL}^{\rm ref}$ denote the metrics of the conventional unoptimized AFDM waveform for identical data symbols and fixed side information which are determined by $\mathbf q$ for each block.

Finally, the resulting finite-alphabet optimization problem is formulated as
\begin{subequations}\label{eq:problem}
  \begin{align}
    \min_{\mathbf q}\quad
    &J(\mathbf q) \label{eq:opt_problem_obj}\\
    \mathrm{s.t.}\quad
    &\mathbf q_g\in\mathcal Q,\quad g=0,1,\ldots,G-1, \label{eq:opt_problem_state}\\
    &\boldsymbol\theta_g=(2\mathbf q_g-\mathbf 1)\theta_{\rm p},\quad g=0,\ldots,G-1, \label{eq:opt_problem_phase_high}\\
    &\theta_m=0,\quad m\in\mathcal L. \label{eq:opt_problem_phase_low}
  \end{align}
\end{subequations}
The corresponding search space has cardinality $|\mathcal Q|^G=2^{GK_{\rm H}}=2^{|\mathcal H|}$. Hence, exhaustive enumeration is computationally prohibitive for large $N$ or large group phase-pattern alphabets, motivating the low-complexity BCD-based algorithm proposed below.
\subsection{GDP-BCD Optimization}
Conventional BCD reduces the search space by updating a single group of variables during each iteration~\cite{Hong2016BSUM}. Explicitly, for the current $\mathbf q$, it replaces $\mathbf q_g$ with a candidate $\widetilde{\mathbf q}\in\mathcal Q$ and forms $\mathbf q_{g\leftarrow\widetilde{\mathbf q}}=[\mathbf q_0^{\rm T},\ldots,\mathbf q_{g-1}^{\rm T},\widetilde{\mathbf q}^{\rm T},\mathbf q_{g+1}^{\rm T},\ldots,\mathbf q_{G-1}^{\rm T}]^{\rm T}$. The corresponding increments are $\Delta\mathbf x_{g,\widetilde{\mathbf q}}=\mathbf x(\mathbf q_{g\leftarrow\widetilde{\mathbf q}})-\mathbf x(\mathbf q)$ and $\Delta\mathbf s_{g,\widetilde{\mathbf q}}=\mathbf A^{H}\Delta\mathbf x_{g,\widetilde{\mathbf q}}$. BCD evaluates each candidate $\widetilde{\mathbf q}\in\mathcal Q$ and accepts the best one only if the objective strictly decreases. Although this monotone local search is more efficient than exhaustive enumeration, its computational complexity is still dominated by exact evaluations over all candidate phase-bit patterns. To further lower the complexity, we propose the GDP-BCD algorithm.

Specifically, our GDP-BCD algorithm incorporates a two-stage pruning mechanism for the promising groups and candidate phase-bit patterns prior to the exact evaluation. To prioritize these groups and patterns, we formulate a surrogate criterion based on a first-order Taylor approximation of the objective function to replace the computationally expensive exact evaluations. Since the argument $\mathbf q$ of $J(\mathbf q)$ is discrete, the gradient of $J$ is evaluated with respect to the continuous waveform $\mathbf s$. Given a small perturbation $\Delta \mathbf{s}$, the objective value is approximated as
\begin{equation}
  J(\mathbf s+\Delta\mathbf s)
  \approx J(\mathbf s)+2\operatorname{Re}\{\mathbf g_{\mathbf s}^{H}\Delta\mathbf s\},
\end{equation}
where
\begin{equation}
  \mathbf g_{\mathbf s}
  =\lambda\frac{\nabla_{\mathbf s^\ast}J_{\rm PAPR}}{J_{\rm PAPR}^{\rm ref}}
  +(1-\lambda)\frac{\nabla_{\mathbf s^\ast}J_{\rm WISL}}{J_{\rm WISL}^{\rm ref}}
\end{equation}
denote the Wirtinger gradient of the real-valued objective function. Assuming that the peak time-domain sample of $\mathbf s$ is unique, i.e., $|s_{n^\star}|^2>|s_n|^2$ for all $n\neq n^\star$, the gradients of the PAPR and PACF-WISL terms can be expressed as
\begin{align}
  \nabla_{\mathbf s^\ast}J_{\rm PAPR}
  &=\frac{N}{\|\mathbf s\|_2^2}s_{n^\star}\mathbf e_{n^\star}
  -\frac{N|s_{n^\star}|^2}{\|\mathbf s\|_2^4}\mathbf s, \\
  \nabla_{\mathbf s^\ast}J_{\rm WISL}
  &=\sum_{k=1}^{N-1}\eta_k
  \left(r^\ast[k]\mathbf J_{N,k}^{\rm T}\mathbf s+r[k](\mathbf J_{N,k}^{\rm T})^{H}\mathbf s\right),
  \label{eq:wisl_grad}
\end{align}
where $n^\star=\argmax_n |s_n|^2$ for $n=0,1,...,N-1$, $\mathbf e_{n^\star}\in\mathbb C^N$ denotes the basis vector satisfying $[\mathbf e_{n^\star}]_n=\delta_{n,n^\star}$, and $r[k]$ represents the per-block PACF sample $r_u[k]$ in \eqref{eq:weighted_fst_pacf_relation} (index $u$ omitted). Both the PACF sequence $\{r[k]\}$ and the circularly shifted products $\mathbf J_{N,k}^{\rm T}\mathbf s$ and $(\mathbf J_{N,k}^{\rm T})^{H}\mathbf s$ in \eqref{eq:wisl_grad} can be calculated using FFT/IFFT operations, thereby attaining efficient computation of the WISL gradient.

By defining the DAFT-domain gradient as $\mathbf g_{\mathbf x}=\mathbf A\mathbf g_{\mathbf s}$, the predicted decrease resulting from the replacement of $\mathbf q_g$ with $\widetilde{\mathbf q}$ is subsequently defined as 
\begin{equation}
D_{g,\widetilde{\mathbf q}}=J(\mathbf q)-J(\mathbf q_{g\leftarrow\widetilde{\mathbf q}})\approx -2\operatorname{Re}\{\mathbf g_{\mathbf x}^{H}\Delta\mathbf x_{g,\widetilde{\mathbf q}}\}.
\end{equation}
A larger value of $D_{g,\widetilde{\mathbf q}}$ implies that replacing $\mathbf q_g$ with $\widetilde{\mathbf q}$ is more likely to reduce the exact objective value.
For both group and candidate-pattern pruning, a table of predicted decreases $\mathbf D=[D_{g,\widetilde{\mathbf q}}]_{g,\widetilde{\mathbf q}\in\mathcal Q}\in\mathbb R^{G\times N_{\rm q}}$ is established, where $D_{g,\mathbf q_g}=-\infty$ is assigned to the current pattern. We then prune groups and candidate phase-bit patterns according to the predicted decreases. Specifically, the first pruning stage retains the $K_G$ groups with the highest scores, given by
\begin{equation}
  \mathcal G_{\rm GDP}
  =\argmax_{\mathcal G'\subseteq\mathbb Z_G,\,|\mathcal G'|=K_G}
  \sum_{g\in\mathcal G'}S_g,
  \
  S_g=\max_{\widetilde{\mathbf q}\in\mathcal Q\setminus\{\mathbf q_g\}}D_{g,\widetilde{\mathbf q}}.
  \label{eq:gdp_group_pruning}
\end{equation}
For each retained group, the second pruning stage preserves only $K_C$ candidate phase-bit patterns with the highest scores, yielding
\begin{equation}
  \widehat{\mathcal Q}_g
  =\argmax_{\mathcal Q'\subseteq\mathcal Q\setminus\{\mathbf q_g\},\,|\mathcal Q'|=K_C}
  \sum_{\widetilde{\mathbf q}\in\mathcal Q'}D_{g,\widetilde{\mathbf q}},
  \quad g\in\mathcal G_{\rm GDP},
  \label{eq:gdp_candidate_pruning}
\end{equation}
where $K_C\leq N_{\rm q}-1$. The objective function is then evaluated exclusively for $\widetilde{\mathbf q}\in\widehat{\mathcal Q}_g$ using the exact expression in \eqref{eq:joint_papr_pacf_obj}. Although the proposed algorithm employs a first-order approximation of the objective function during the pruning phase to reduce complexity, the exact evaluation of the pruned candidate set is mandatory for the convergence of the optimization process.
The detailed procedure of the proposed GDP-BCD algorithm is summarized in \textbf{Algorithm~\ref{alg:gdp_bcd_papr_pacf}}.

\begin{algorithm}[t]
  \caption{GDP-BCD for Waveform Optimization}
  \label{alg:gdp_bcd_papr_pacf}
  \footnotesize
  \KwIn{Initial phase-bit vector $\mathbf q^{(0)}$, fixed PLIM power-pattern vector $\boldsymbol\ell$ and payload data symbols, maximum number of iterations $T_{\rm GDP}$, group budget $K_G$, candidate-pattern budget $K_C$, tolerance $\varepsilon_{\rm GDP}$.}
  \KwOut{Optimized phase-bit vector $\mathbf q$.}
  $\mathbf q\leftarrow\mathbf q^{(0)}$, $J_{\rm cur}\leftarrow J(\mathbf q)$\;
  \For{$t=1,2,\ldots,T_{\rm GDP}$}{
    $\mathrm{improved}\leftarrow\mathrm{false}$\;
    Construct $\mathbf x=\mathbf x(\mathbf q)$ and $\mathbf s=\mathbf A^{H}\mathbf x$\;
    Compute $\mathbf g_{\mathbf s}=\lambda\nabla_{\mathbf s^\ast}J_{\rm PAPR}/J_{\rm PAPR}^{\rm ref}+(1-\lambda)\nabla_{\mathbf s^\ast}J_{\rm WISL}/J_{\rm WISL}^{\rm ref}$\;
    Compute $\mathbf g_{\mathbf x}=\mathbf A\mathbf g_{\mathbf s}$\;
    \For{$g=0,1,\ldots,G-1$}{
      \ForEach{$\widetilde{\mathbf q}\in\mathcal Q\setminus\{\mathbf q_g\}$}{
        Form sparse $\Delta\mathbf x_{g,\widetilde{\mathbf q}}$ and compute $D_{g,\widetilde{\mathbf q}}=-2\operatorname{Re}\{\mathbf g_{\mathbf x}^{H}\Delta\mathbf x_{g,\widetilde{\mathbf q}}\}$\;
      }
      $S_g\leftarrow\max_{\widetilde{\mathbf q}\neq\mathbf q_g}D_{g,\widetilde{\mathbf q}}$\;
    }
    Obtain $\mathcal G_{\rm GDP}$ using \eqref{eq:gdp_group_pruning}\;
    \ForEach{$g\in\mathcal G_{\rm GDP}$}{
      Obtain $\widehat{\mathcal Q}_g$ using \eqref{eq:gdp_candidate_pruning}\;
      \ForEach{$\widetilde{\mathbf q}\in\widehat{\mathcal Q}_g$}{
        Compute $J_{g,\widetilde{\mathbf q}}^{\rm exact}=J(\mathbf q_{g\leftarrow\widetilde{\mathbf q}})$\;
      }
      $\mathbf q_g^\star\leftarrow\argmin_{\widetilde{\mathbf q}\in\widehat{\mathcal Q}_g}J_{g,\widetilde{\mathbf q}}^{\rm exact}$\;
      \If{$J_{\rm cur}-J_{g,\mathbf q_g^\star}^{\rm exact}>\varepsilon_{\rm GDP}$}{
        $\mathbf q_g\leftarrow\mathbf q_g^\star$, $J_{\rm cur}\leftarrow J_{g,\mathbf q_g^\star}^{\rm exact}$\;
        $\mathrm{improved}\leftarrow\mathrm{true}$\;
      }
    }
    \If{$\mathrm{improved}=\mathrm{false}$}{Stop the iteration\;}
  }
\end{algorithm}

\subsection{Algorithm Complexity}
With FFT-based AFDM modulation and PACF evaluation, the computational complexity of one exact objective evaluation $C_{\rm ex}$ is $\mathcal O(N\log N)$. Since BCD tests $(N_{\rm q}-1)$ alternatives for each group in every iteration, its complexity after at most $T_{\rm BCD}$ iterations is $C_{\rm BCD}=\mathcal O\left(T_{\rm BCD}G(N_{\rm q}-1)C_{\rm ex}\right)$.

In each iteration of GDP-BCD, it first computes the waveform and DAFT-domain gradients with complexity $C_{\rm grad}=\mathcal O(N\log N)$ and then scores all group-candidate-pattern pairs through sparse products $\mathbf g_{\mathbf x}^{H}\Delta\mathbf x_{g,\widetilde{\mathbf q}}$. If each candidate update changes $\zeta$ DAFT-domain entries, the scoring complexity is $\mathcal O\left(G(N_{\rm q}-1)\zeta\right)$. Only $K_G$ selected groups and $K_C$ candidate phase-bit patterns per selected group are then exactly verified, giving
\begin{equation}
  \!\!C_{\rm GDP}\!=\!\mathcal O\left(T_{\rm GDP}\left(C_{\rm grad} \!+ \! G(N_{\rm q}-1)\zeta \!+ \! K_GK_CC_{\rm ex}\right)\right).
\end{equation}
Thus, GDP-BCD reduces the dominant exact evaluations from $G(N_{\rm q}-1)$ to $K_GK_C$ per iteration. When exact evaluations dominate and the scoring overhead is small, ${C_{\rm BCD}^{\rm iter}}/{C_{\rm GDP}^{\rm iter}}$ approaches $G(N_{\rm q}-1)/(K_GK_C)$. Hence, $K_G\ll G$ and $K_C\ll N_{\rm q}-1$ yield substantial complexity reduction while retaining exact verification for accepted updates.

\section{Error Performance Analysis}
This section characterizes the error performance of the proposed PLIM-PR-AFDM block under perfect channel state information. Since practical PAPR-oriented operation is directly related to PA nonlinearity, we derive the pairwise error probability (PEP), BER upper-bound, and the full-diversity condition when the transmitted waveform is distorted by a nonlinear PA. The ideal-PA result is then obtained as a simplified special case.

\subsection{Nonlinear-PA Mismatched Analysis}
Let $B_{\rm u}=GB$ denote the number of information bits per AFDM block, where $B$ is defined in \eqref{eq:user_bits_per_group}. Each label $\mathbf b\in\{0,1\}^{B_{\rm u}}$ determines $\boldsymbol\ell(\mathbf b)$ and a data-symbol realization. Conditioned on that fixed pair, let $\mathcal Q_{\rm opt}(\mathbf b)=\argmin_{\mathbf q\in\mathcal Q^G}J(\mathbf q)$ denote the set of optimal phase-bit vectors, and choose its lexicographically smallest member $\mathbf q^\star(\mathbf b)$. This fixed rule defines the deterministic codebook
\begin{equation}
  \mathcal C
  =\left\{
    \mathbf x\big(\boldsymbol\ell(\mathbf b),\mathbf q^\star(\mathbf b)\big):
    \mathbf b\in\{0,1\}^{B_{\rm u}}
  \right\},
  \
  |\mathcal C|=2^{B_{\rm u}}.
  \label{eq:transmit_codebook}
\end{equation}
These assumptions make the deterministic encoding map injective and hence justify the cardinality in \eqref{eq:transmit_codebook}. Index the labels and codewords as $\mathbf b_i$ and $\mathbf x_i=\mathbf x\big(\boldsymbol\ell(\mathbf b_i),\mathbf q^\star(\mathbf b_i)\big)$, respectively, and let the corresponding time-domain AFDM block be $\mathbf s_i=\mathbf A^{H}\mathbf x_i$. For a nonlinear PA, the sample-wise output can be written as $\bar s_i[n]=f_{\rm PA}(s_i[n])$, where $f_{\rm PA}(\cdot)$ is the nonlinear transfer function of the PA. Thus, the actual DAFT-domain vector with PA distortion is
\begin{equation}
  \label{eq:pa_daft_codeword}
  \bar{\mathbf x}_i
  =
  \mathbf A\bar{\mathbf s}_i
  =
  \mathbf A f_{\rm PA}(\mathbf A^{H}\mathbf x_i).
\end{equation}

Define the ideal detection matrix and the actual PA-distorted transmit matrix as $\mathbf S_i \triangleq\left[\mathbf H_1\mathbf x_i, \mathbf H_2\mathbf x_i, \ldots, \mathbf H_P\mathbf x_i \right]$ and $\bar{\mathbf S}_i\triangleq\left[\mathbf H_1\bar{\mathbf x}_i, \mathbf H_2\bar{\mathbf x}_i, \ldots, \mathbf H_P\bar{\mathbf x}_i \right]$, respectively. When the $i$-th codeword is transmitted, the received signal is given by
\begin{equation}
  \label{eq:pa_actual_received_signal}
  \mathbf y
  =
  \bar{\mathbf S}_i\mathbf h+
  \mathbf w,
  \quad
  \mathbf w\sim\mathcal{CN}(\mathbf 0,N_0\mathbf I_N).
\end{equation}
Since the receiver is unaware of the PA nonlinearity, the ML detector based on the codebook can be formulated as
\begin{equation}
  \label{eq:pa_mismatched_ml_detector}
  \hat i
  =
  \argmin_{j}
  \left\|
  \mathbf y-
  \mathbf S_j\mathbf h
  \right\|_2^2 .
\end{equation}
For the pairwise error event $(i\rightarrow j)$, define
\begin{align}
  \boldsymbol\Psi_{ij}
  &\triangleq
  \mathbf S_i-
  \mathbf S_j
  =
  \left[
    \mathbf H_1\boldsymbol\delta_{ij},
    \ldots,
    \mathbf H_P\boldsymbol\delta_{ij}
  \right], \label{eq:Psi_ij}\\
  \mathbf E_i
  &\triangleq
  \bar{\mathbf S}_i-
  \mathbf S_i
  =
  \left[
    \mathbf H_1(\bar{\mathbf x}_i-\mathbf x_i),
    \ldots,
    \mathbf H_P(\bar{\mathbf x}_i-\mathbf x_i)
  \right],
\end{align}
where $\boldsymbol\delta_{ij}\triangleq\mathbf x_i-\mathbf x_j$. Given the channel realization $\mathbf h$, under the pairwise error event $(i\rightarrow j)$ we have $\|\boldsymbol\Psi_{ij}\mathbf h+\mathbf E_i\mathbf h+\mathbf w\|_2^2\le\|\mathbf E_i\mathbf h+\mathbf w\|_2^2$, which can be reformulated as
\begin{equation}
  2\operatorname{Re}\{(\boldsymbol\Psi_{ij}\mathbf h)^{H}\mathbf w\} \!
  \le \!
  -\|\boldsymbol\Psi_{ij}\mathbf h\|_2^2
  -2\operatorname{Re}\{(\boldsymbol\Psi_{ij}\mathbf h)^{H}\mathbf E_i\mathbf h\}. \!
  \label{eq:pa_mismatch_error_event}
\end{equation}
Since $2\operatorname{Re}\{(\boldsymbol\Psi_{ij}\mathbf h)^{H}\mathbf w\}\sim\mathcal N(0,2N_0\|\boldsymbol\Psi_{ij}\mathbf h\|_2^2)$, the conditional mismatched PEP can be expressed as
\begin{equation}
  \label{eq:pa_mismatch_conditional_pep}
  P_{i\rightarrow j}^{\rm PA}(\mathbf h)
  =
  Q\left(
    \frac{
      \mathbf h^{H}\mathbf M_{ij}^{(i)}\mathbf h
    }{
      \sqrt{2N_0\mathbf h^{H}\boldsymbol\Omega_{ij}\mathbf h}
    }
  \right),
\end{equation}
where $Q(\cdot)$ denotes the Gaussian $Q$-function, defined by $Q(x)\triangleq\frac{1}{\sqrt{2\pi}}\int_x^{\infty}e^{-t^2/2}\,dt.$ Moreover, $\boldsymbol\Omega_{ij} \triangleq\boldsymbol\Psi_{ij}^{H}\boldsymbol\Psi_{ij}$ and $\mathbf M_{ij}^{(i)}\triangleq\boldsymbol\Omega_{ij}+\boldsymbol\Psi_{ij}^{H}\mathbf E_i+\mathbf E_i^{H}\boldsymbol\Psi_{ij}$. The matrix $\mathbf M_{ij}^{(i)}$ is Hermitian but not necessarily positive semidefinite, and it depends on the transmitted codeword $i$. Hence, the error events $(i\rightarrow j)$ and $(j\rightarrow i)$ generally have different PEPs. Upon averaging \eqref{eq:pa_mismatch_conditional_pep} over the fading vector, we have $P_{i\rightarrow j}^{\rm PA}=\mathbb E_{\mathbf h}\left[P_{i\rightarrow j}^{\rm PA}(\mathbf h)\right]$. When $\mathbf h\sim\mathcal{CN}(\mathbf 0,\mathbf R_h)$, $P_{i\rightarrow j}^{\rm PA}$ can be evaluated as
\begin{align}
  \label{eq:pa_mismatch_average_pep_integral}
  P_{i\rightarrow j}^{\rm PA}\!
  &=\!
  \int_{\mathbb C^P}\!
  Q\!\left(
    \frac{
      \mathbf h^{H}\mathbf M_{ij}^{(i)}\mathbf h
    }{
      \sqrt{2N_0\mathbf h^{H}\boldsymbol\Omega_{ij}\mathbf h}
    }
  \right)\!
  \frac{
    \exp(-\mathbf h^{H}\mathbf R_h^{-1}\mathbf h)
  }{
    \pi^P\det(\mathbf R_h)
  }
  d\mathbf h, \\
  &\overset{\eqref{eq:rayleigh_fading}}{=}
  \frac{P^P}{\pi^P}
  \int_{\mathbb C^P}
  Q\left(
    \frac{
      \mathbf h^{H}\mathbf M_{ij}^{(i)}\mathbf h
    }{
      \sqrt{2N_0\mathbf h^{H}\boldsymbol\Omega_{ij}\mathbf h}
    }
  \right)
  e^{-P\|\mathbf h\|_2^2}
  d\mathbf h .
\end{align}

If all codewords are transmitted with equal probability, the BER upper-bound under nonlinear PA is
\begin{equation}
  \label{eq:pa_mismatch_abep_union_bound}
  P_b^{\rm PA}
  \le
  \frac{1}{B_{\rm u}|\mathcal C|}
  \sum_{i=1}^{|\mathcal C|}
  \sum_{\substack{j=1\\j\neq i}}^{|\mathcal C|}
  d_{\rm H}(\mathbf b_i,\mathbf b_j)
  P_{i\rightarrow j}^{\rm PA},
\end{equation}
where the Hamming distance between the two $B_{\rm u}$-bit labels is $d_{\rm H}(\mathbf b_i,\mathbf b_j)\triangleq\sum_{n=1}^{B_{\rm u}}\mathbb I\!\left([\mathbf b_i]_n\neq[\mathbf b_j]_n\right).$ The following proposition gives the necessary and sufficient condition under which nonlinear PA mismatch preserves the full multipath diversity.

\begin{proposition}
  \label{prop:pa_mismatch_full_diversity}
  Let $\mathbf h\sim\mathcal{CN}(\mathbf0,\mathbf R_h)$ with fixed $\mathbf R_h\succ\mathbf0$. For the fixed injective codebook in \eqref{eq:transmit_codebook} and fixed path matrices, assume $\boldsymbol\Psi_{ij}\ne\mathbf0$ for all $i\ne j$ and a deterministic, measurable, memoryless PA with finite outputs on all codebook samples. Take $\gamma=1/N_0\to\infty$ at unit average input power, with PA parameters, input-back-off (IBO), and gain normalization fixed. Then the ML detector achieves full $P$-path diversity, $P_b^{\rm PA}=\Theta(\gamma^{-P})$, if and only if
  \begin{equation}
    \label{eq:pa_mismatch_full_diversity_condition}
    \mathbf M_{ij}^{(i)}\succ\mathbf0,
    \quad \forall i\ne j.
  \end{equation}
\end{proposition}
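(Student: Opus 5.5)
The plan is to work directly with the conditional PEP \eqref{eq:pa_mismatch_conditional_pep}. We write its argument as $\sqrt{\gamma/2}\,\phi_{ij}(\mathbf h)$, where $\phi_{ij}(\mathbf h)\triangleq \mathbf h^H\mathbf M_{ij}^{(i)}\mathbf h/\sqrt{\mathbf h^H\boldsymbol\Omega_{ij}\mathbf h}$. Under the stated assumptions (fixed codebook, fixed memoryless PA with finite outputs, fixed IBO and normalization), $\mathbf M_{ij}^{(i)}$ and $\boldsymbol\Omega_{ij}$ do not depend on $\gamma$. The whole problem therefore reduces to the small-ball behaviour of $\phi_{ij}$ under $\mathcal{CN}(\mathbf 0,\mathbf R_h)$. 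Since the bound \eqref{eq:pa_mismatch_abep_union_bound} is a finite sum with $1\le d_{\rm H}\le B_{\rm u}$, we have $P_b^{\rm PA}=\Theta(\gamma^{-P})$ if and only if $\max_{i\ne j}P^{\rm PA}_{i\to j}=\Theta(\gamma^{-P})$. We take $P_b^{\rm PA}$ to mean this union-bound expression, which is the quantity analysed in this section.

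\emph{Sufficiency.} First observe that $\mathbf M_{ij}^{(i)}\succ\mathbf 0$ forces $\boldsymbol\Psi_{ij}$ to have full column rank. Indeed, if $\boldsymbol\Psi_{ij}\mathbf v=\mathbf 0$, then each of the three terms of $\mathbf v^H\mathbf M_{ij}^{(i)}\mathbf v$ vanishes, so $\mathbf v=\mathbf 0$. Hence $\boldsymbol\Omega_{ij}\succ\mathbf 0$. Let $\mu_{\min}$ and $\mu_{\max}$ be the extreme eigenvalues of $\mathbf M_{ij}^{(i)}$, and $\omega_{\min},\omega_{\max}$ those of $\boldsymbol\Omega_{ij}$. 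Then
\begin{equation*}
c_1\|\mathbf h\|\le\phi_{ij}(\mathbf h)\le c_2\|\mathbf h\|,
\qquad c_1=\mu_{\min}/\sqrt{\omega_{\max}},\quad c_2=\mu_{\max}/\sqrt{\omega_{\min}}.
\end{equation*}
For the upper bound, apply the Chernoff bound $Q(x)\le e^{-x^2/2}$ and average the resulting Gaussian exponential over $\mathbf h$. This gives $P_{i\to j}^{\rm PA}\le\det(\mathbf I+\tfrac{c_1^2\gamma}{4}\mathbf R_h)^{-1}=O(\gamma^{-P})$. For the lower bound, restrict to the ball $\|\mathbf h\|^2\le 1/\gamma$. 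There $Q(\cdot)\ge Q(c_2/\sqrt2)>0$, and the ball has probability $\Theta(\gamma^{-P})$ because the density is bounded above and below near the origin when $\mathbf R_h\succ\mathbf 0$.

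\emph{Necessity.} Suppose some pair violates \eqref{eq:pa_mismatch_full_diversity_condition}, i.e., there is $\mathbf v\neq\mathbf 0$ with $\mathbf v^H\mathbf M_{ij}^{(i)}\mathbf v\le0$. We split into two cases.
\begin{itemize}
\item[(a)] If $\mathbf v^H\mathbf M_{ij}^{(i)}\mathbf v<0$, continuity gives an open cone around $\mathbf v$ on which $\phi_{ij}<0$, so $Q(\cdot)\ge 1/2$ there. This cone has positive $\gamma$-independent probability, which yields an error floor, so $P_b^{\rm PA}\neq O(\gamma^{-P})$.
\item[(b)] Otherwise $\mathbf M_{ij}^{(i)}\succeq\mathbf 0$ and is singular. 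Take $\mathbf v$ in its kernel and write $\mathbf h=a(\mathbf v+\mathbf e)$ with $\mathbf e\perp\mathbf v$. Then $\mathbf h^H\mathbf M_{ij}^{(i)}\mathbf h=|a|^2\mathbf e^H\mathbf M_{ij}^{(i)}\mathbf e\le\mu_{\max}|a|^2\|\mathbf e\|^2$.
\begin{itemize}
\item[(b1)] If $\boldsymbol\Omega_{ij}\mathbf v\neq\mathbf 0$, the denominator is bounded below by about $|a|\sqrt{\mathbf v^H\boldsymbol\Omega_{ij}\mathbf v}$ for small $\mathbf e$. So $\phi_{ij}\lesssim\|\mathbf e\|^2$ on $\{|a|\in[1,2]\}$. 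Restricting to $\|\mathbf e\|\le\gamma^{-1/4}$ keeps the $Q$-function bounded below, on a set of probability $\Theta(\gamma^{-(P-1)/2})$.
\item[(b2)] If $\boldsymbol\Psi_{ij}\mathbf v=\mathbf 0$, then $\mathbf M_{ij}^{(i)}\mathbf v=\mathbf 0$ as well. By Cauchy--Schwarz on the cross terms, $\mathbf e^H\mathbf M_{ij}^{(i)}\mathbf e\le C\|\mathbf e\|\sqrt{\mathbf e^H\boldsymbol\Omega_{ij}\mathbf e}$, which gives $\phi_{ij}\lesssim|a|\|\mathbf e\|$. Taking $\|\mathbf e\|\le\gamma^{-1/2}$ gives probability $\Theta(\gamma^{-(P-1)})$. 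The remaining directions of $\mathbf e$ with $\boldsymbol\Omega_{ij}\mathbf e=\mathbf 0$ form a null set, and the condition $\boldsymbol\Psi_{ij}\neq\mathbf 0$ keeps $\phi_{ij}$ well defined almost surely.
\end{itemize}
\end{itemize}
In every case the PEP decays strictly slower than $\gamma^{-P}$, so full diversity is lost.

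The main obstacle is case (b): the semidefinite but singular boundary. There, the mismatch term $\boldsymbol\Psi_{ij}^H\mathbf E_i+\mathbf E_i^H\boldsymbol\Psi_{ij}$ interacts with a possible kernel of $\boldsymbol\Psi_{ij}$, and the denominator can also degenerate. I would first try the polar decomposition $\mathbf h=a(\mathbf v+\mathbf e)$ together with a careful small-ball volume count in the $2(P-1)$ real dimensions transverse to $\mathbf v$. If the bookkeeping between sub-cases (b1) and (b2) becomes messy, the fallback is to reduce to a $2\times2$ compression of $\mathbf M_{ij}^{(i)}$ and $\boldsymbol\Omega_{ij}$ onto $\mathrm{span}\{\mathbf v,\mathbf e\}$ and do the estimate there.
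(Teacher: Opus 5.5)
\textbf{A gap in the reduction step.} You replace $P_b^{\rm PA}$ by the right-hand side of \eqref{eq:pa_mismatch_abep_union_bound}. The proposition, however, concerns the BER of the mismatched ML detector itself, and that bound controls it only from above. As written, this leaves the $\Omega(\gamma^{-P})$ half of sufficiency unproved. More seriously, your necessity argument only shows that the \emph{bound} fails to decay like $\gamma^{-P}$, not that the BER does.

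The fix is the one-line lower bound the paper starts from. Whenever $j$ attains a smaller metric than the transmitted $i$, the ML decision differs from $i$. Because the labels are distinct, at least one bit is then wrong, so $P_b^{\rm PA}\ge P^{\rm PA}_{i\rightarrow j}/(B_{\rm u}|\mathcal C|)$ for every $i\ne j$. With this, the true BER lies between constant multiples of $\max_{i\ne j}P^{\rm PA}_{i\rightarrow j}$, and the rest of your reduction applies verbatim.

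\textbf{Comparison with the paper's argument.} Your sufficiency argument and case (a) coincide with the paper's. This includes deducing $\boldsymbol\Omega_{ij}\succ\mathbf 0$ from the kernel and sandwiching the conditional PEP between $Q$-functions linear in $\sqrt\gamma\|\mathbf h\|_2$.

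The singular semidefinite case is handled differently. The paper needs no case split: $\mathbf M\succeq\mathbf 0$ gives $\ker(\boldsymbol\Omega)\subseteq\ker(\mathbf M)$, hence $\mathbf M\preceq c\boldsymbol\Omega$ for a fixed $c>0$. So the $Q$-argument is at most $\sqrt{\gamma c\,\mathbf h^H\mathbf M\mathbf h/2}$, and a rank-$r$ Gaussian quadratic form with $r<P$ yields $P^{\rm PA}_{i\rightarrow j}\ge c_0\gamma^{-r}$.

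Your coordinates $\mathbf h=a(\mathbf v+\mathbf e)$ with the (b1)/(b2) split are also correct, and the resulting rates $\gamma^{-(P-1)/2}$ and $\gamma^{-(P-1)}$ are enough.
\begin{itemize}
\item The Jacobian $|a|^{2(P-1)}$ is harmless on $|a|\in[1,2]$.
\item In (b2), $\boldsymbol\Omega\mathbf v=\mathbf 0$ reduces $\phi_{ij}$ to $|a|\,\mathbf e^H\mathbf M\mathbf e/\sqrt{\mathbf e^H\boldsymbol\Omega\mathbf e}$, and your Cauchy--Schwarz estimate then gives $\phi_{ij}\lesssim|a|\|\mathbf e\|$.
\end{itemize}

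The paper's single operator inequality removes exactly the bookkeeping you flagged as the main obstacle, so your $2\times 2$ compression fallback is unnecessary. Your route is more informative in (b1), however. It shows that when $\boldsymbol\Psi\mathbf v\neq\mathbf 0$ the margin is quadratic in the deviation from $\ker(\mathbf M)$. Confining only the component of $\mathbf e$ in the range of $\mathbf M$ to a $\gamma^{-1/4}$-ball then gives a PEP lower bound of order $\gamma^{-r/2}$. For $r\ge 1$ this improves on the paper's $\gamma^{-r}$. The paper's bound is loose there because $\mathbf M\preceq c\boldsymbol\Omega$ replaces the order-one denominator $\sqrt{\mathbf h^H\boldsymbol\Omega\mathbf h}$ by the much smaller $\sqrt{\mathbf h^H\mathbf M\mathbf h/c}$.
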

\begin{proof}
  A competing metric smaller than that of the transmitted codeword forces at least one bit error. Hence, for every $i\ne j$,
  \begin{equation}
    \label{eq:pa_mismatch_ber_lower_bound}
    P_b^{\rm PA}\ge
    \frac{P_{i\rightarrow j}^{\rm PA}}{B_{\rm u}|\mathcal C|}.
  \end{equation}
  Together with \eqref{eq:pa_mismatch_abep_union_bound}, this reduces the claim to pairwise decay rates. For a fixed pair, write $\mathbf M=\mathbf M_{ij}^{(i)}$ and $\boldsymbol\Omega=\boldsymbol\Omega_{ij}$, both in $\mathbb C^{P\times P}$. If $\mathbf M\succ\mathbf0$, then $\boldsymbol\Omega\succ\mathbf0$, since $\mathbf v\in\ker(\boldsymbol\Omega)$ implies $\mathbf v^H\mathbf M\mathbf v=0$. Thus, \eqref{eq:pa_mismatch_conditional_pep} lies between $Q(b\sqrt\gamma\|\mathbf h\|_2)$ and $Q(a\sqrt\gamma\|\mathbf h\|_2)$ for fixed $0<a\le b$. Gaussian averaging gives $P_{i\rightarrow j}^{\rm PA}=\Theta(\gamma^{-P})$, proving sufficiency.

  Conversely, a negative direction of $\mathbf M$ yields a nonempty open set of negative margins with positive fading probability, so $P_{i\rightarrow j}^{\rm PA}\to\Pr(\mathbf h^H\mathbf M\mathbf h<0)>0$. If instead $\mathbf M\succeq\mathbf0$ is singular, then $\ker(\boldsymbol\Omega)\subseteq\ker(\mathbf M)$, implying $\mathbf M\preceq c\boldsymbol\Omega$ for some fixed $c>0$. Consequently,
  \begin{equation}
    \label{eq:pa_mismatch_singular_pep_lower_bound}
    \begin{aligned}
      P_{i\rightarrow j}^{\rm PA}
      &\ge\mathbb E_{\mathbf h}\!\left[
        Q\!\left(\sqrt{\frac{\gamma c}{2}\mathbf h^H\mathbf M\mathbf h}\right)
      \right]\\
      &\ge c_0\gamma^{-r},
    \end{aligned}
  \end{equation}
  where $r=\operatorname{rank}(\mathbf M)<P$ and $c_0>0$ is independent of $\gamma$. The last bound holds for sufficiently large $\gamma$ since the Gaussian quadratic form has $r$ nonzero eigenvalues, including the constant $Q(0)=1/2$ case when $r=0$. Therefore, \eqref{eq:pa_mismatch_ber_lower_bound} precludes diversity $P$, establishing necessity.
\end{proof}

\subsection{Ideal-PA Case}
The ideal-PA case follows from the preceding mismatched analysis by setting $f_{\rm PA}(s)=s$. Then $\bar{\mathbf x}_i=\mathbf x_i$, $\mathbf E_i=\mathbf 0$, and $\mathbf M_{ij}^{(i)}=\boldsymbol\Omega_{ij}.$ The conditional PEP is the simplified form of \eqref{eq:pa_mismatch_conditional_pep}, given by
\begin{equation}
  \label{eq:abep_conditional_pep}
  P(\mathbf x_i\rightarrow\mathbf x_j\mid\mathbf h)
  =
  Q\left(
    \sqrt{
    \frac{\mathbf h^{H}\boldsymbol\Omega_{ij}\mathbf h}{2N_0}}
  \right).
\end{equation}
Since the numerator is a nonnegative quadratic form, Craig's representation of the $Q$-function holds 
\begin{equation}
  Q(x)
  =
  \frac{1}{\pi}
  \int_0^{\pi/2}
  \exp\!\left(
    -\frac{x^2}{2\sin^2\vartheta}
  \right)d\vartheta,
  \quad x\ge0.
  \label{eq:craig_q_representation}
\end{equation}
Substituting the quadratic-form identity $\mathbb E[e^{-s\mathbf h^{H}\boldsymbol\Omega_{ij}\mathbf h}]=\det(\mathbf I_P+s\mathbf R_h\boldsymbol\Omega_{ij})^{-1}$ yields the PEP as
\begin{align}
  \label{eq:abep_average_pep_general}
  P(\mathbf x_i\rightarrow\mathbf x_j)
  &=
  \frac{1}{\pi}\int_{0}^{\pi/2}
  \det\left(
    \mathbf I_P+
    \frac{\mathbf R_h\boldsymbol\Omega_{ij}}
    {4N_0\sin^2\vartheta}
  \right)^{-1}d\vartheta , \\
  & \overset{\eqref{eq:rayleigh_fading}}{=}
  \frac{1}{\pi}\!\!\int_{0}^{\pi/2} \!\!
  \det \! \left(
    \mathbf I_P+
    \frac{\boldsymbol\Omega_{ij}}
    {4PN_0\sin^2\vartheta}
  \right)^{-1} \!\!d\vartheta .
\end{align}
Let $\{\lambda_{ij,1},\ldots,\lambda_{ij,r_{ij}}\}$ be the nonzero eigenvalues of $\boldsymbol\Omega_{ij}$, where $r_{ij}=\operatorname{rank}(\boldsymbol\Omega_{ij})=\operatorname{rank}(\boldsymbol\Psi_{ij})$. Then
\begin{equation}
  \label{eq:abep_average_pep_eigen}
  P(\mathbf x_i\rightarrow\mathbf x_j)
  =
  \frac{1}{\pi}\!\int_{0}^{\pi/2}\!
  \prod_{k=1}^{r_{ij}}
  \left(
    1+
    \frac{\lambda_{ij,k}}
    {4PN_0\sin^2\vartheta}
  \right)^{-1}\!\!d\vartheta .
\end{equation}
From \eqref{eq:abep_average_pep_eigen}, using the Wallis integral, the high-SNR PEP is
\begin{equation}
  \label{eq:abep_high_snr_pep}
  P(\mathbf x_i\rightarrow\mathbf x_j)
  \overset{\gamma\rightarrow\infty}{\sim}
  \frac{
    P^{r_{ij}}\binom{2r_{ij}}{r_{ij}}
  }{
    2\prod_{k=1}^{r_{ij}}\lambda_{ij,k}
  }
  \gamma^{-r_{ij}}.
\end{equation}

Under equiprobable codewords, the BER upper-bound using ideal PA can be expressed as
\begin{equation}
  \label{eq:abep_union_bound}
  P_b
  \le
  \frac{1}{B_{\rm u}|\mathcal C|}
  \sum_{i=1}^{|\mathcal C|}
  \sum_{\substack{j=1\\j\neq i}}^{|\mathcal C|}
  d_{\rm H}(\mathbf b_i,\mathbf b_j)
  P(\mathbf x_i\rightarrow\mathbf x_j).
\end{equation}
Therefore, the diversity order of each codeword pair is governed by $r_{ij}=\operatorname{rank}(\boldsymbol\Psi_{ij})$, while the product $\prod_{k=1}^{r_{ij}}\lambda_{ij,k}$ determines the coding gain. The ideal-PA full-diversity condition is the specialization of Proposition~\ref{prop:pa_mismatch_full_diversity} with $\mathbf E_i=\mathbf 0$, namely
\begin{equation}
  \label{eq:full_diversity_condition}
  \min_{\boldsymbol\delta\in\mathcal D_{\rm u}}
  \operatorname{rank}
  \left[
    \mathbf H_1\boldsymbol\delta,
    \ldots,
    \mathbf H_P\boldsymbol\delta
  \right]
  = P,
\end{equation}
where $\mathcal D_{\rm u}=\{\boldsymbol\delta_{ij}:d_{\rm H}(\mathbf b_i,\mathbf b_j)>0\}$. This condition requires both channel separability and codeword separability. According to~\cite{AFDM}, when \eqref{eq:c1_condition} is satisfied, different path components satisfy the channel-separability condition. For the codewords, avoiding $\boldsymbol\delta_{ij}=\mathbf 0$ for any distinct pair requires
\begin{align}
  \label{eq:codeword_separability_conditions}
  \sqrt{\alpha_{\rm H}}e^{j\theta}\mathcal X_{\rm H}
  \cap
  \sqrt{\alpha_{\rm L}}\mathcal X_{\rm L}
  &= \varnothing,
  \quad \forall\theta\in\{-\theta_{\rm p},\theta_{\rm p}\}, \\
  e^{j\theta_{\rm p}}\mathcal X_{\rm H}
  \cap
  e^{-j\theta_{\rm p}}\mathcal X_{\rm H}
  &= \varnothing.
\end{align}
When \eqref{eq:full_diversity_condition} and \eqref{eq:codeword_separability_conditions} hold, the system achieves full $P$-path diversity under ideal PA.

\section{Proposed RA-AMP Receiver}
\label{sec:receiver_design}

This section develops a practical receiver for the proposed PLIM-PR-AFDM system. An RA-AMP receiver is proposed for the DAFT-domain received vector $\mathbf y$ in \eqref{eq:comm_daft_model}. It converts the doubly dispersive DAFT-domain observation into iteratively refined pseudo-observations and then performs factorized group-wise denoising by exploiting the local SI partition.

\subsection{RA-AMP Receiver}
Our RA-AMP first generates a damped matched-filter pseudo-observation and then performs factorized group maximum a posteriori (MAP) detection from that pseudo-observation. The DAFT-domain pseudo-observation used by the group denoiser is modeled as
\begin{equation}
  \label{eq:rapamp_scalar_model}
  \mathbf z_{\rm use}^{(t)}=\mathbf x+\boldsymbol\xi^{(t)},
  \quad \boldsymbol\xi^{(t)} \ \dot{\sim} \ \mathcal{CN}(\mathbf 0,v_{\rm lin}^{(t)}\mathbf I_N).
\end{equation}
where $v_{\rm lin}^{(t)}$ denotes the effective noise variance of the pseudo-observation, which will be defined in \eqref{eq:rapamp_variance_clip}. RA-AMP uses an MMSE warm start:
\begin{equation}
  \label{eq:rapamp_lmmse_filter}
  \mathbf W=(\mathbf H_{\rm eff}^{H}\mathbf H_{\rm eff}+N_0\mathbf I_N)^{-1}\mathbf H_{\rm eff}^{H},
  \quad \mathbf z_{\rm raw}=\mathbf W\mathbf y,
\end{equation}
and $\mathbf G_{\rm eq}=\mathbf W\mathbf H_{\rm eff}$. Let $d_m=[\mathbf G_{\rm eq}]_{m,m}$. The diagonal-normalized observation $z_{{\rm Eq},m}$ and its variance $\sigma_m^2$ estimate are
\begin{equation}
  \label{eq:rapamp_lmmse_unbias}
  z_{{\rm Eq},m}=z_{{\rm raw},m}/d_m,
\end{equation}
\begin{equation}
  \label{eq:rapamp_lmmse_variance}
  \sigma_m^2=\frac{N_0\sum_n|W_{m,n}|^2+
  \sum_{j\ne m}|[\mathbf G_{\rm eq}]_{m,j}|^2}{|d_m|^2}.
\end{equation}
The initial pseudo-noise variance is $v^{(0)}=N^{-1}\sum_{m=0}^{N-1}\sigma_m^2$, and the initial estimate obtained by PLIM group-MMSE denoising is given by
\begin{equation}
  \label{eq:rapamp_initial_denoising}
  \widehat{\mathbf x}^{(0)}=\mathcal D_{\rm PLIM}^{\rm F}(\mathbf z_{\rm Eq},v^{(0)}),
\end{equation}
where $\mathcal D_{\rm PLIM}^{\rm F}(\cdot, \cdot)$ denotes the factorized PLIM group-MMSE denoiser whose compact expression will be given in \eqref{eq:rapamp_plim_denoiser_expression}.

During iteration $t$, let $\bar h^2=\|\mathbf H_{\rm eff}\|_F^2/N$ and $\mu=\eta_{\rm RA}/\bar h^2$, where $\eta_{\rm RA}$ controls the matched-filter step size. RA-AMP first computes the residual error as
\begin{equation}
  \label{eq:rapamp_residual}
  \mathbf r^{(t)}=\mathbf y-\mathbf H_{\rm eff}\widehat{\mathbf x}^{(t)},
\end{equation}
and forms the matched-filter pseudo-observation
\begin{equation}
  \label{eq:rapamp_linear_observation}
  \mathbf z_{\rm lin}^{(t)}=
  \widehat{\mathbf x}^{(t)}+
  \mu\mathbf H_{\rm eff}^{H}\mathbf r^{(t)}.
\end{equation}
This update is one residual-gradient correction of the objective $\|\mathbf y-\mathbf H_{\rm eff}\mathbf x\|^2$. The scalar variance used by the nonlinear denoiser is estimated from the residual energy, yielding
\begin{equation}
  \label{eq:rapamp_variance_raw}
  \widetilde v_{\rm lin}^{(t)}=N_0\mu^2\bar h^2+
  c_r\frac{\max\{\|\mathbf r^{(t)}\|^2/N-N_0,0\}}{\bar h^2},
\end{equation}
where $c_r$ denotes the residual-variance calibration factor. To avoid overconfident early iterations, this variance is clipped as
\begin{equation}
  \label{eq:rapamp_variance_clip}
  v_{\rm lin}^{(t)}=
  \min\left\{
    \max\left\{\widetilde v_{\rm lin}^{(t)},c_r N_0\right\},
    10\max\left\{v^{(0)},N_0\right\}
  \right\}.
\end{equation}

With $\mathbf z_{\rm use}^{(-1)}$ initialized by $\mathbf z_{\rm Eq}$, RA-AMP stabilizes the residual correction by damping both the pseudo-observation and the denoiser output, hence we have
\begin{align}
  \label{eq:rapamp_memory_damping}
  \mathbf z_{\rm use}^{(t)}
  &=\lambda_d\mathbf z_{\rm lin}^{(t)}+(1-\lambda_d)\mathbf z_{\rm use}^{(t-1)}, \\
  \label{eq:x_den_t}
  \mathbf x_{\rm den}^{(t)}
  &=\mathcal D_{\rm PLIM}^{\rm F}(\mathbf z_{\rm use}^{(t)},v_{\rm lin}^{(t)}), \\
  \widehat{\mathbf x}^{(t+1)}
  &=\lambda_d\mathbf x_{\rm den}^{(t)}+(1-\lambda_d)\widehat{\mathbf x}^{(t)},
\end{align}
where $\mathbf z_{\rm use}^{(t)}$ is the damped pseudo-observation, $\mathbf x_{\rm den}^{(t)}$ is the denoiser output, and $\lambda_d\in(0,1]$ is the damping factor.

We next specify the nonlinear module $\mathcal D_{\rm PLIM}^{\rm F}(\cdot, \cdot)$ in \eqref{eq:x_den_t}. For a candidate PLIM power pattern $\ell$ in group $g$, the high-power, SI, and ordinary low-power index sets are respectively expressed as
\begin{subequations}
  \label{eq:rapamp_candidate_sets}
  \begin{align}
    \mathcal H_{g,\ell}
    &=\{m_{g,\ell,0}^{\rm H},m_{g,\ell,1}^{\rm H},\ldots,m_{g,\ell,K_{\rm H}-1}^{\rm H}\}, \\
    \mathcal S_{g,\ell}
    &=\{m_{g,\ell,0}^{\rm SI},m_{g,\ell,1}^{\rm SI},\ldots,m_{g,\ell,R_{\rm SI}-1}^{\rm SI}\}, \\
    \mathcal D_{g,\ell}
    &=\mathcal G_g\setminus(\mathcal H_{g,\ell}\cup\mathcal S_{g,\ell}),
  \end{align}
\end{subequations}
where $R_{\rm SI}=|\mathcal S_{g,\ell}|$ denotes the number of SI subcarriers per group. The $r$-th SI subcarrier embeds the phase-bit subset $\mathcal I_r\subseteq\{1,\ldots,K_{\rm H}\}$, where $\mathcal I_r\cap\mathcal I_{r'}=\varnothing$ for $r\neq r'$ and $\bigcup_{r=1}^{R_{\rm SI}}\mathcal I_r=\{1,\ldots,K_{\rm H}\}$.
Given an input pair $(\mathbf z,v)$, the denoiser output is the posterior average of the conditional local means over the PLIM power patterns, i.e.,
\begin{equation}
  \label{eq:rapamp_plim_denoiser_expression}
  \left[\mathcal D_{\rm PLIM}^{\rm F}(\mathbf z,v)\right]_m
  =
  \sum_{\ell\in\mathcal B_{\rm use}}
  \rho_g(\ell;\mathbf z_g,v)\bar\mu_m(\ell;\mathbf z_g,v),
\end{equation}
where $\mathcal B_{\rm use}$ denotes the available PLIM power-pattern set, $\rho_g(\ell;\mathbf z_g,v)$ is the posterior probability of the PLIM power pattern, and $\bar\mu_m(\ell;\mathbf z_g,v)$ is the conditional local posterior mean with $\mathbf z_g\triangleq\{z_m:m\in\mathcal G_g\}$. Their evaluation proceeds in the following three steps: local likelihood evaluation, block-wise factorization, and posterior-moment aggregation.

\emph{1) Local likelihood evaluation:} Given $\mathbf z_{\rm use}^{(t)}$ and $v_{\rm lin}^{(t)}$, let $\mathsf p_i(x)=\exp\left(-|z_i-x|^2/v_{\rm lin}^{(t)}\right)$ denote the Gaussian observation kernel, where $z_i$ is the $i$-th entry of $\mathbf z_{\rm use}^{(t)}$. The local likelihoods for high-power, SI, and ordinary low-power subcarriers are expressed as
\begin{subequations}
\label{eq:rapamp_local_likelihoods}
\begin{align}
  \psi_{\rm H}(m,q)
  &=\frac{1}{M_{\rm H}}\sum_{a\in\mathcal X_{\rm H}}
  \mathsf p_m(x_{\rm H}(a,q)), \\
  \psi_{\rm SI}(m,\boldsymbol\tau)
  &=\mathsf p_m(x_{\rm SI}(\boldsymbol\tau)), \\
  \psi_{\rm L}(d)
  &=\frac{1}{M_{\rm L}}\sum_{a\in\mathcal X_{\rm L}}
  \mathsf p_d(x_{\rm L}(a)).
\end{align}
\end{subequations}
where $x_{\rm H}(a,q)=\sqrt{\alpha_{\rm H}}e^{j(2q-1)\theta_{\rm p}}a$, $x_{\rm SI}(\boldsymbol\tau)=\sqrt{\alpha_{\rm L}}\chi_{\rm L}(\boldsymbol\tau)$, and $x_{\rm L}(a)=\sqrt{\alpha_{\rm L}}a$ are the noiseless local symbols. The local SI label vector $\boldsymbol\tau\in\{0,1\}^{|\mathcal I_r|}$ has $2^{|\mathcal I_r|}$ possible values, with $|\mathcal I_r|\leq B_{\rm L}$.

\emph{2) Block-wise factorization:} For a fixed pattern $\ell$, each SI subcarrier couples only the phase bits in its own subset $\mathcal I_r$. Hence the corresponding local block factor is
\begin{equation}
\label{eq:rapamp_local_factor}
\varphi_{g,\ell,r}^{(t)}(\boldsymbol\tau)
=
\psi_{\rm SI}(m_{g,\ell,r}^{\rm SI},\boldsymbol\tau)
\prod_{k\in\mathcal I_r}\psi_{\rm H}(m_{g,\ell,k}^H,\tau_k),
\end{equation}
for $\boldsymbol\tau\in\{0,1\}^{|\mathcal I_r|}$. Then, the unnormalized likelihood of the complete group state $(\ell,\mathbf q)$ is factorized as
\begingroup
\setlength{\abovedisplayskip}{1pt}
\setlength{\belowdisplayskip}{1pt}
\setlength{\abovedisplayshortskip}{0pt}
\setlength{\belowdisplayshortskip}{1pt}
\begin{equation}
\label{eq:rapamp_factorized_weight}
\begin{aligned}
W_g^{(t)}(\ell,\mathbf q)
&=
\prod_{d\in\mathcal D_{g,\ell}}\psi_{\rm L}(d)
\prod_{r=1}^{R_{\rm SI}}
\varphi_{g,\ell,r}^{(t)}(\mathbf q_{\mathcal I_r}),
\end{aligned}
\end{equation}
\endgroup
where $\mathbf q_{\mathcal I_r}$ is the phase-bit block embedded in the $r$-th SI symbol. Since the subsets $\{\mathcal I_r\}_{r=1}^{R_{\rm SI}}$ are non-overlapping, marginalizing over $\mathbf q$ reduces to independent local sums, yielding
\begingroup
\setlength{\abovedisplayskip}{1pt}
\setlength{\belowdisplayskip}{1pt}
\setlength{\abovedisplayshortskip}{0pt}
\setlength{\belowdisplayshortskip}{1pt}
\setlength{\jot}{0pt}
\begin{align}
\label{eq:rapamp_local_partition}
Z_{g,\ell,r}^{(t)}
&=
\sum_{\boldsymbol\tau\in\{0,1\}^{|\mathcal I_r|}}
\varphi_{g,\ell,r}^{(t)}(\boldsymbol\tau), \\
Z_g^{(t)}(\ell)
&=
p(\ell)
\prod_{d\in\mathcal D_{g,\ell}}\psi_{\rm L}(d)
\prod_{r=1}^{R_{\rm SI}}Z_{g,\ell,r}^{(t)},
\end{align}
\endgroup
where $p(\ell)$ can be omitted under a uniform prior over the PLIM power patterns. The posterior probability of the PLIM power pattern and the conditional local phase posterior are respectively expressed as
\begingroup
\setlength{\abovedisplayskip}{1pt}
\setlength{\belowdisplayskip}{1pt}
\setlength{\abovedisplayshortskip}{0pt}
\setlength{\belowdisplayshortskip}{1pt}
\begin{equation}
\label{eq:rapamp_pattern_posterior}
\rho_g^{(t)}(\ell)
=
\frac{Z_g^{(t)}(\ell)}{
\sum_{\ell'\in\mathcal B_{\rm use}} Z_g^{(t)}(\ell')}, \quad \pi_{g,\ell,r}^{(t)}(\boldsymbol\tau)
=
\frac{\varphi_{g,\ell,r}^{(t)}(\boldsymbol\tau)}{Z_{g,\ell,r}^{(t)}}.
\end{equation}
\endgroup

\emph{3) Posterior-moment aggregation:} The posterior moments are computed from the local posteriors. For high-power symbols, let us define $\mu_{\rm H}(m,q)$ and $\nu_{\rm H}(m,q)$ as the posterior mean and second-order moment of the high-power symbol at subcarrier $m$ conditioned on the phase bit $q$, respectively, i.e.,
\begin{align}
\label{eq:rapamp_high_symbol_moments}
\mu_{\rm H}(m,q)
&=
\frac{\sum_{a\in\mathcal X_{\rm H}}x_{\rm H}(a,q)\omega_{\rm H}(m,q,a)}
{M_{\rm H}\psi_{\rm H}(m,q)}, \\
\nu_{\rm H}(m,q)
&=
\frac{\sum_{a\in\mathcal X_{\rm H}}|x_{\rm H}(a,q)|^2\omega_{\rm H}(m,q,a)}
{M_{\rm H}\psi_{\rm H}(m,q)},
\end{align}
where $\omega_{\rm H}(m,q,a)=\mathsf p_m(x_{\rm H}(a,q))$.
For an ordinary low-power subcarrier, we have
\begin{align}
\label{eq:rapamp_low_symbol_moments}
\mu_{\rm L}(m)
&=\frac{1}{M_{\rm L}\psi_{\rm L}(m)}
\sum_{a\in\mathcal X_{\rm L}}x_{\rm L}(a)\mathsf p_m(x_{\rm L}(a)), \\
\nu_{\rm L}(m)
&=\frac{1}{M_{\rm L}\psi_{\rm L}(m)}
\sum_{a\in\mathcal X_{\rm L}}|x_{\rm L}(a)|^2\mathsf p_m(x_{\rm L}(a)).
\end{align}
For an SI subcarrier carrying the local SI label vector $\boldsymbol\tau$, let $\mu_{\rm SI}(\boldsymbol\tau)=x_{\rm SI}(\boldsymbol\tau)$ and $\nu_{\rm SI}(\boldsymbol\tau)=|x_{\rm SI}(\boldsymbol\tau)|^2$. Given a candidate $\ell$, the local posterior moments are obtained block-wise. For $m\in\{m_{g,\ell,k}^H:k\in\mathcal I_r\}\cup\{m_{g,\ell,r}^{\rm SI}\}$, define
\begin{equation}
\label{eq:rapamp_cond_local_moment_kernel}
\begin{aligned}
(\mu_m(\boldsymbol\tau),\nu_m(\boldsymbol\tau))\triangleq
\begin{cases}
(\mu_{\rm H}(m,\tau_k),\nu_{\rm H}(m,\tau_k)), & m=m_{g,\ell,k}^H,\\
(\mu_{\rm SI}(\boldsymbol\tau),\nu_{\rm SI}(\boldsymbol\tau)), & m=m_{g,\ell,r}^{\rm SI}.
\end{cases}\nonumber
\end{aligned}
\end{equation}
Then the local posterior moments can be derived as
\begin{align}
\label{eq:rapamp_cond_block_moment}
\bar\mu_m^{(t)}(\ell)
&=\sum_{\boldsymbol\tau\in\{0,1\}^{|\mathcal I_r|}}\pi_{g,\ell,r}^{(t)}(\boldsymbol\tau)\mu_m(\boldsymbol\tau), \\
\bar\nu_m^{(t)}(\ell)
&=\sum_{\boldsymbol\tau\in\{0,1\}^{|\mathcal I_r|}}\pi_{g,\ell,r}^{(t)}(\boldsymbol\tau)\nu_m(\boldsymbol\tau).
\end{align}
For $m\in\mathcal D_{g,\ell}$, $\bar\mu_m^{(t)}(\ell)=\mu_{\rm L}(m)$ and $\bar\nu_m^{(t)}(\ell)=\nu_{\rm L}(m)$. Upon averaging these conditional moments over the pattern posterior, we can derive the denoiser output and its average posterior variance:
\begin{align}
\label{eq:rapamp_posterior_variance}
x_{{\rm den},m}^{(t)}
&=
\left[\mathcal D_{\rm PLIM}^{\rm F}
(\mathbf z_{\rm use}^{(t)},v_{\rm lin}^{(t)})\right]_m \nonumber \\
&=
\sum_{\ell\in\mathcal B_{\rm use}}
\rho_g^{(t)}(\ell)\bar\mu_m^{(t)}(\ell),
\quad m\in\mathcal G_g, \\
e_m^{(t)}
&=
\sum_{\ell\in\mathcal B_{\rm use}}
\rho_g^{(t)}(\ell)\bar\nu_m^{(t)}(\ell),
\quad m\in\mathcal G_g, \\
\bar v_{\rm post}^{(t)}&=
\frac{1}{N}\sum_{m=0}^{N-1}
\left(e_m^{(t)}-|x_{{\rm den},m}^{(t)}|^2\right).
\end{align}

Consequently, RA-AMP scores the current estimate as
\begin{equation}
\label{eq:rapamp_evidence_score}
S^{(t)}=\|\mathbf y-\mathbf H_{\rm eff}\widehat{\mathbf x}^{(t+1)}\|^2
+N_0\bar v_{\rm post}^{(t)}.
\end{equation}
If $S^{(t)}$ improves the best historical score, RA-AMP stores $\mathbf z_{\rm best}=\mathbf z_{\rm use}^{(t)}$ and $\mathbf v_{\rm best}=v_{\rm lin}^{(t)}\mathbf 1_N$. The iterations stop when
\begin{equation}
\label{eq:rapamp_stopping_delta}
\Delta^{(t)}=\frac{\|\widehat{\mathbf x}^{(t+1)}-
\widehat{\mathbf x}^{(t)}\|^2}{\|\widehat{\mathbf x}^{(t)}\|^2+\epsilon}
\end{equation}
falls below a tolerance threshold. The final pseudo-observation can be obtained as $\mathbf z_{\rm RA}=\mathbf z_{\rm best}$ with $\bar v_{\rm RA}=N^{-1}\sum_m v_{{\rm best},m}$.

RA-AMP then performs hard group detection using the stored pseudo-observation. For each pattern $\ell$, define the factorized log-MAP metric as
\begingroup
\setlength{\jot}{-1pt}
\begin{equation}
\begin{aligned}
\label{eq:rapamp_factorized_map_metric}
\Lambda_g(\ell)
=&
\ln p(\ell)
+\sum_{d\in\mathcal D_{g,\ell}}\ln\psi_{\rm L}(d) \\
&+\sum_{r=1}^{R_{\rm SI}}
\max_{\boldsymbol\tau\in\{0,1\}^{|\mathcal I_r|}}
\ln\varphi_{g,\ell,r}(\boldsymbol\tau),
\end{aligned}
\end{equation}
\endgroup
where all likelihoods are evaluated at $\mathbf z_{\rm RA}$ and $\bar v_{\rm RA}$. The MAP pattern and phase-bit estimates are
\begin{equation}
\label{eq:rapamp_final_pattern_decision}
\widehat\ell_g=
\argmax_{\ell\in\mathcal B_{\rm use}}\Lambda_g(\ell),
\end{equation}
\begin{equation}
\label{eq:rapamp_final_phase_decision}
\widehat{\mathbf q}_{g,\mathcal I_r}=
\argmax_{\boldsymbol\tau\in\{0,1\}^{|\mathcal I_r|}}
\varphi_{g,\widehat\ell_g,r}(\boldsymbol\tau),
\quad r=1,\ldots,R_{\rm SI}.
\end{equation}
Equations \eqref{eq:rapamp_factorized_map_metric}--\eqref{eq:rapamp_final_phase_decision} are equivalent to maximizing $P_g(\ell,\mathbf q|\mathbf z_{{\rm RA},g};\bar v_{\rm RA})$ over the complete PLIM-phase state space, but only require local SI-block searches. Once $\widehat\ell_g$ and $\widehat{\mathbf q}_g$ are obtained, the receiver demaps the index bits and phase bits. It then slices the high-power data symbols after compensating the detected phase rotations and slices the ordinary low-power data symbols with the unrotated low-power constellation. The proposed RA-AMP receiver is summarized in \textbf{Algorithm~\ref{alg:rapamp_receiver}}.

\vspace{-4mm}

\begin{algorithm}[t]
\caption{RA-AMP Receiver}
\label{alg:rapamp_receiver}
\footnotesize
\KwIn{$\mathbf y$, $\mathbf H_{\rm eff}$, $N_0$, $\mathcal B_{\rm use}$, $\mathcal X_{\rm H}$, $\mathcal X_{\rm L}$, $\alpha_{\rm H}$, $\alpha_{\rm L}$, $\theta_{\rm p}$, $\eta_{\rm RA}$, $\lambda_d$, $c_r$, $\epsilon$ and $T_{\max}$.}
\KwOut{Index bits, phase bits, and data symbol bits.}

Use $\mathbf H_{\rm eff}$ as the composite DAFT-domain channel matrix\;
Compute $\mathbf z_{\rm Eq}$, $\{\sigma_m^2\}$, and $v^{(0)}$ using \eqref{eq:rapamp_lmmse_filter}--\eqref{eq:rapamp_lmmse_variance}\;
Initialize $\widehat{\mathbf x}^{(0)}\leftarrow\mathcal D_{\rm PLIM}^{\rm F}(\mathbf z_{\rm Eq},v^{(0)})$, $\mathbf z_{\rm use}^{(-1)}\leftarrow\mathbf z_{\rm Eq}$, and $S_{\rm best}\leftarrow\infty$\;

\For{$t=0,1,\ldots,T_{\max}-1$}{
Compute $\mathbf r^{(t)}$ and $\mathbf z_{\rm lin}^{(t)}$ using \eqref{eq:rapamp_residual} and \eqref{eq:rapamp_linear_observation}\;
Estimate and clip $v_{\rm lin}^{(t)}$ using \eqref{eq:rapamp_variance_raw} and \eqref{eq:rapamp_variance_clip}\;
Form $\mathbf z_{\rm use}^{(t)}$ and run the factorized PLIM group-MMSE denoiser using \eqref{eq:rapamp_local_factor}--\eqref{eq:rapamp_posterior_variance}\;
Update $\widehat{\mathbf x}^{(t+1)}$ by the output damping in \eqref{eq:rapamp_memory_damping}\;
Compute $S^{(t)}$ using \eqref{eq:rapamp_evidence_score}; if improved, store $\mathbf z_{\rm best}=\mathbf z_{\rm use}^{(t)}$ and $\mathbf v_{\rm best}=v_{\rm lin}^{(t)}\mathbf 1_N$\;
Stop if $\Delta^{(t)}$ in \eqref{eq:rapamp_stopping_delta} is below the tolerance\;
}

Set $\mathbf z_{\rm RA}\leftarrow\mathbf z_{\rm best}$ and $\bar v_{\rm RA}\leftarrow N^{-1}\sum_m v_{{\rm best},m}$\;
\For{$g=0,1,\ldots,G-1$}{
Solve the factorized group MAP problem using \eqref{eq:rapamp_factorized_map_metric}--\eqref{eq:rapamp_final_phase_decision}\;
Demap $\widehat\ell_g$ and $\widehat{\mathbf q}_g$, slice the high-power and ordinary low-power data symbols, and output the corresponding bits\;
}
\end{algorithm}

\subsection{Complexity Analysis}

Let $T_{\rm max}$ denote the maximum number of iterations and $S$ represent the number of nonzero entries of $\mathbf H_{\rm eff}$. With scalar likelihood and moment tables precomputed for the current pseudo-observation, the factorized PLIM denoiser requires only local SI-block summations. Let us define the per-pattern factorized group-evaluation size as
\begin{equation}
\label{eq:complexity_rapamp_omega}
\Omega_{\rm F} \triangleq U+
\sum_{r=1}^{R_{\rm SI}}2^{|\mathcal I_r|}
\le U+R_{\rm SI}2^{B_{\rm L}}.
\end{equation}
Then, excluding the MMSE warm start, the overall complexity order of RA-AMP detection is given by
\begin{equation}
\label{eq:complexity_rapamp}
\begin{split}
&C_{\rm RA\text{-}AMP}
=\mathcal O\bigg(
\left(T_{\rm max}+2\right)GL\Omega_{\rm F} \\
&+\left(T_{\rm max}+2\right)N\left(M_{\rm H}+M_{\rm L}\right)
+T_{\rm max}\left(2S+N\right)
\bigg).
\end{split}
\end{equation}
The first term in \eqref{eq:complexity_rapamp} accounts for the initial denoising step, $T_{\rm max}$ iterative denoising steps, and the final factorized group MAP evaluation. The second term accounts for scalar likelihood and moment table construction, the third term corresponds to the two matrix-vector products $\mathbf H_{\rm eff}\widehat{\mathbf x}^{(t)}$ and $\mathbf H_{\rm eff}^{H}\mathbf r^{(t)}$ in each iteration, and the remaining term accounts for scalar damping operations. Compared with the full group denoiser, the factorization replaces the per-pattern phase-vector enumeration $2^{K_{\rm H}}$ by $\sum_{r=1}^{R_{\rm SI}}2^{|\mathcal I_r|}\le R_{\rm SI}2^{B_{\rm L}}$ without changing the posterior means or MAP decisions. Therefore, the final MAP stage can be reduced to a complexity of $\mathcal O(GL\Omega_{\rm F})$. Since $\mathbf H_{\rm eff}$ is sparse with $S=Nd$ and $U$, $B_{\rm L}$, $M_{\rm H}$, $M_{\rm L}$, and $T_{\rm max}$ are fixed, RA-AMP scales linearly with $N$. If the MMSE warm start is formed by a dense matrix inverse, an additional one-time $\mathcal O(N^3)$ complexity term should be included.

\section{Numerical Results}

This section evaluates the communication and sensing performance of the proposed PLIM-PR-AFDM scheme.
\vspace{-4mm}

\addtocounter{figure}{1}
\begin{figure*}[!t]
\centering
\begin{subfigure}[t]{0.325\textwidth}
\centering
\includegraphics[width=\linewidth]{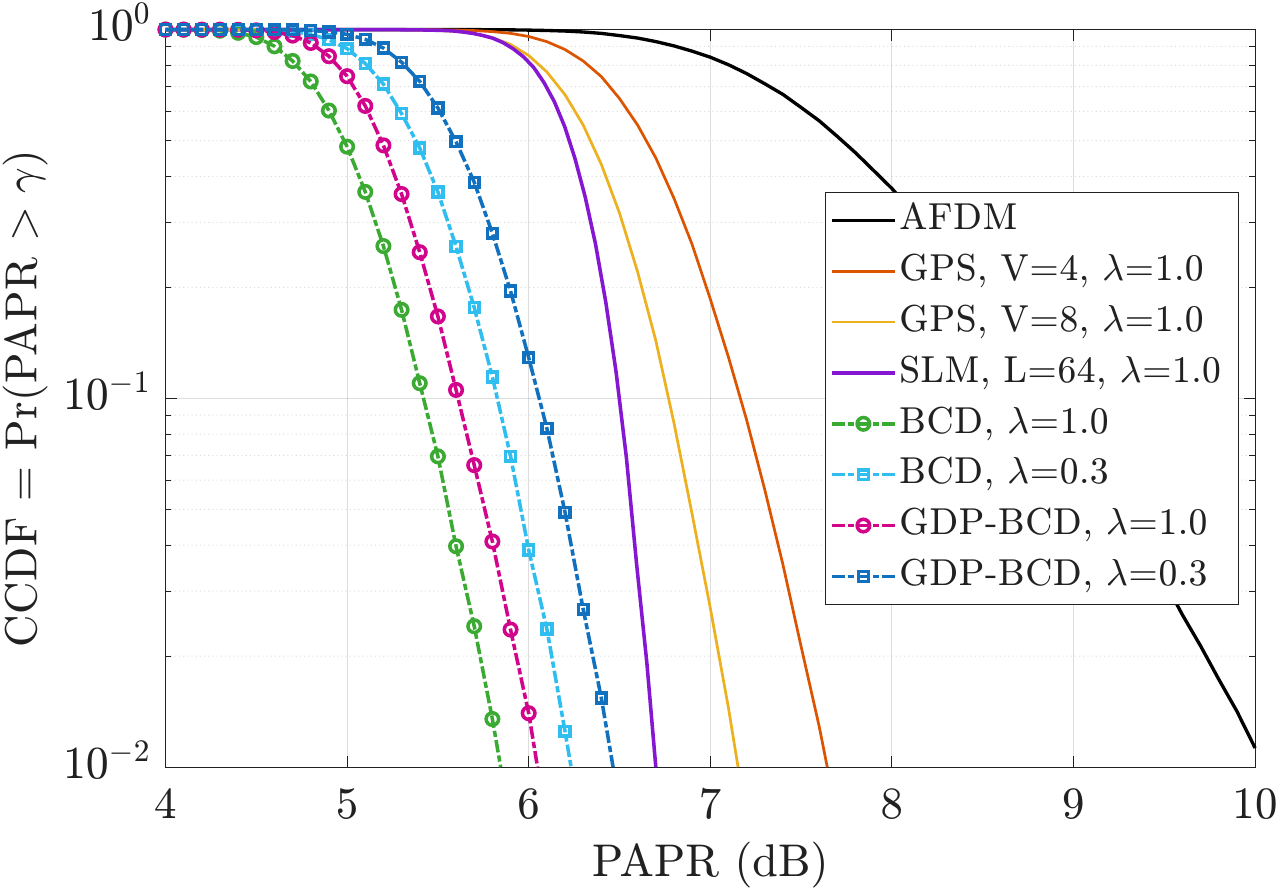}
\caption{PAPR CCDF.}
\label{fig:wave_opt_papr_ccdf}
\end{subfigure}
\hfill
\begin{subfigure}[t]{0.325\textwidth}
\centering
\includegraphics[width=\linewidth]{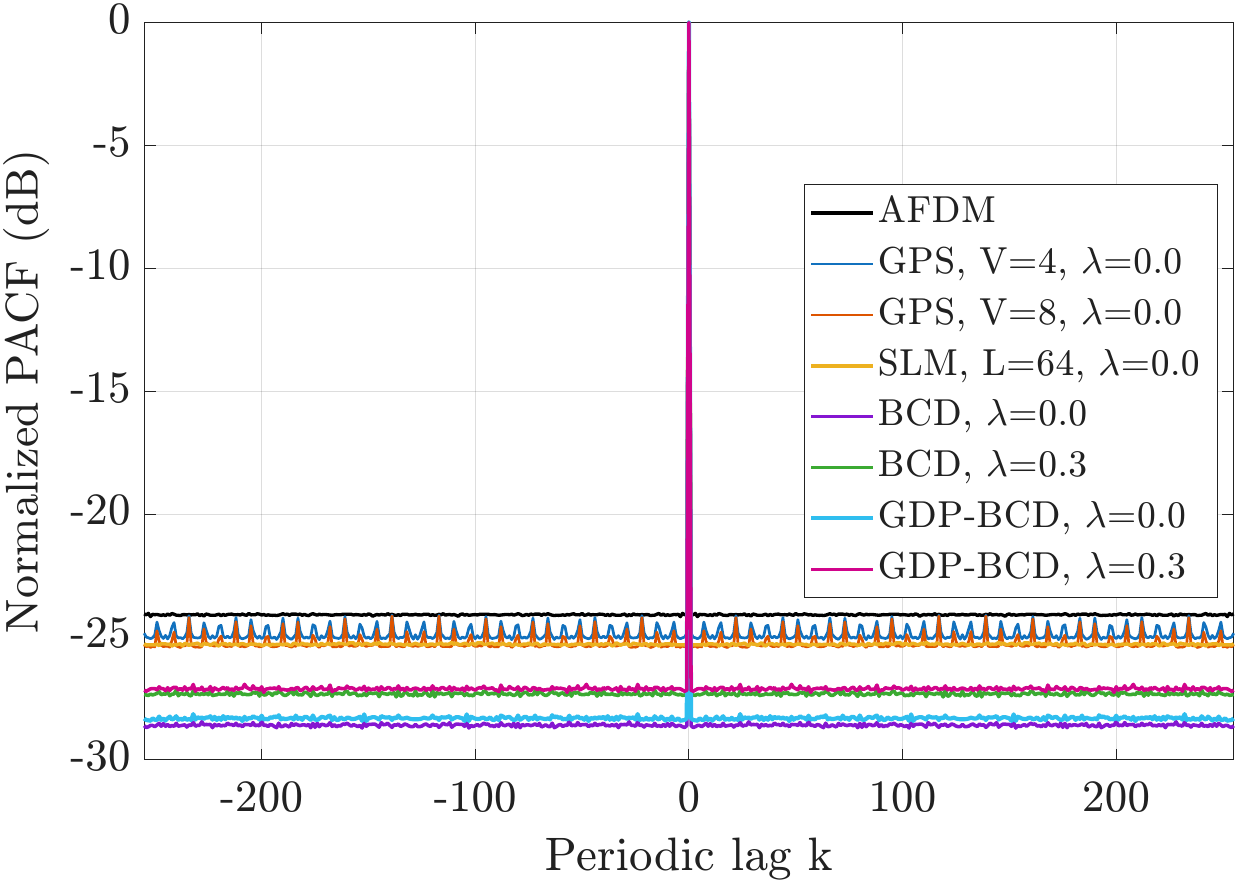}
\caption{Normalized PACF.}
\label{fig:wave_opt_pacf}
\end{subfigure}
\hfill
\begin{subfigure}[t]{0.325\textwidth}
\centering
\includegraphics[width=\linewidth]{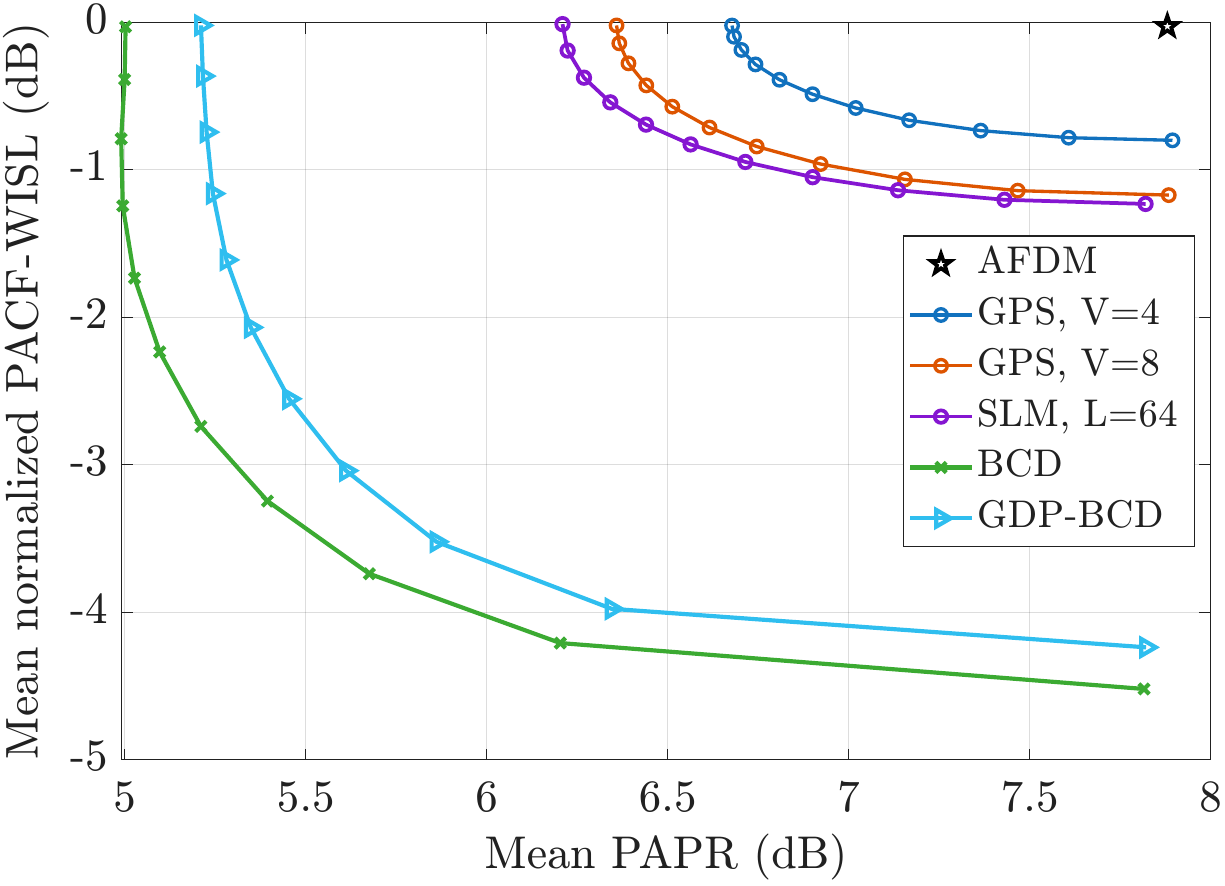}
\caption{Pareto fronts.}
\label{fig:wave_opt_pareto}
\end{subfigure}
\captionsetup{font=small}
\caption{Waveform optimization performance comparison for $N=256$, including PAPR reduction, zero Doppler cut sidelobe suppression, and the performance tradeoff of the proposed BCD/GDP-BCD algorithms against GPS, SLM, and conventional AFDM baselines.}
\label{fig:wave_opt_comparison}
\vspace{-6mm}
\end{figure*}

\addtocounter{figure}{-2}
\begin{figure}[!t]
\centering
\includegraphics[width=0.7\linewidth]{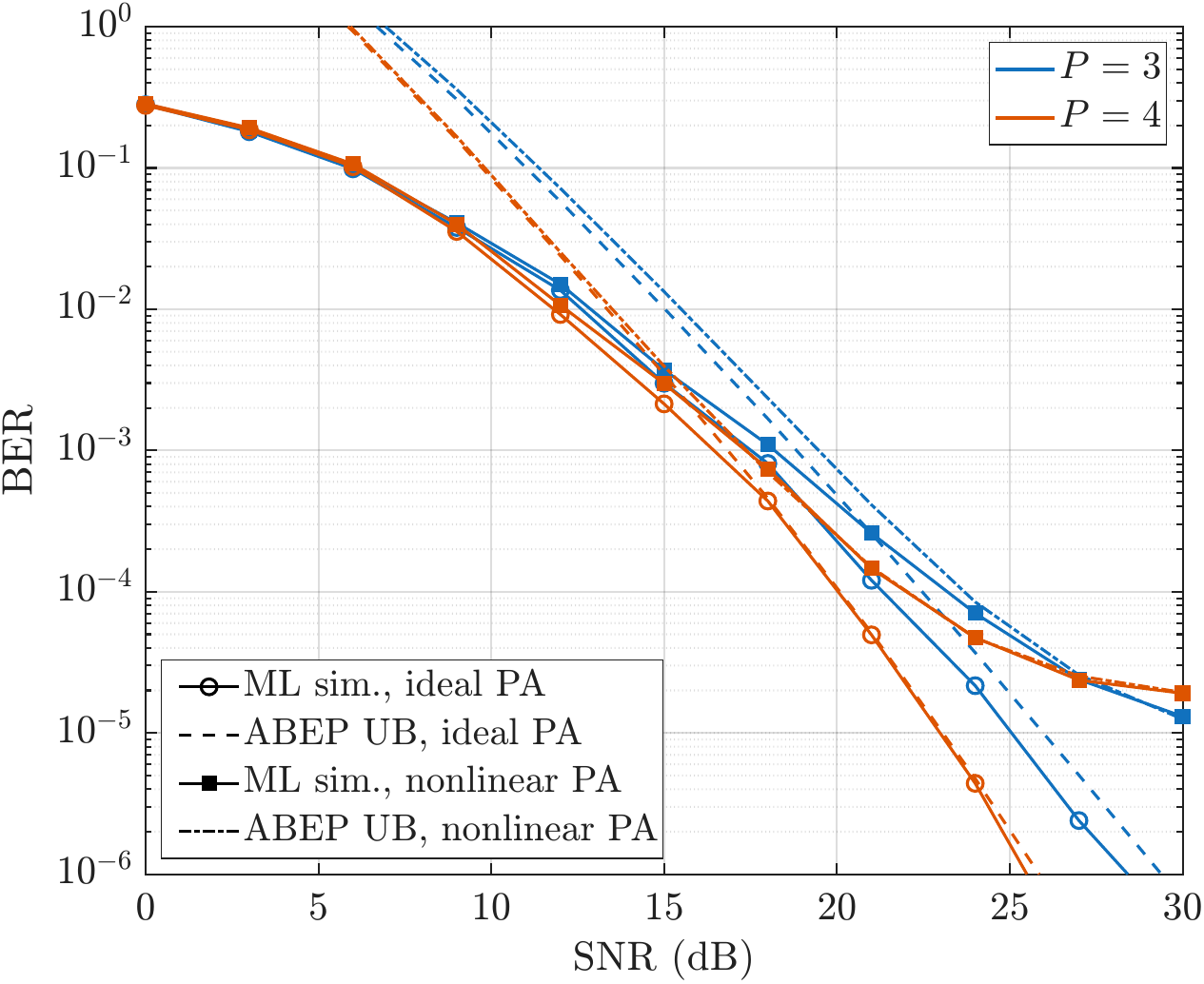}
\caption{Comparison between the theoretical upper bound on the ABEP and simulated BER for the proposed system under varying numbers of paths.}
\label{fig:ABEP_N6_U3}
\vspace{-6mm}
\end{figure}
\addtocounter{figure}{1}

\subsection{Theoretical ABEP Analysis}
\label{subsec:theoretical_abep}
Fig.~\ref{fig:ABEP_N6_U3} compares the theoretical upper bound on the ABEP with the simulated ML BER under ideal and nonlinear PA cases. The parameters are set to $N=6$, $U=3$, $K_{\rm H}=1$, $\theta_{\rm p}=\pi/5$, $(M_{\rm H},M_{\rm L})=(4,2)$, $\rho=2.8$, $c_1=5/(2N)$, and $c_2=\sqrt{2}$, with IBO $=3$~dB for the nonlinear PA, where the nonlinear PA is characterized by the modified Rapp model~\cite{nokia2016realistic}. For $P=3$, the delay and normalized Doppler vectors are $[0,1,3]$ and $[0.2,0.5,0.7]$, respectively, and for $P=4$, they are $[0,1,3,5]$ and $[0.2,0.4,0.55,0.7]$, respectively. The channel gains follow $\mathbf h\sim\mathcal{CN}(\mathbf 0,\mathbf I_P/P)$. The theoretical ABEP bounds closely follow the simulated BER in the medium-to-high SNR regions, validating the proposed error analysis. Increasing the number of paths from $P=3$ to $P=4$ increases the ideal-PA BER slope, which is consistent with the expected multipath-diversity gain. With nonlinear PA distortion, however, we observe an error floor around $10^{-5}$, which weakens the additional gain from larger $P$ and emphasizes the need for PAPR-aware waveform design.
\vspace{-6mm}
\subsection{Evaluation of the Proposed Optimization Algorithm}
\label{subsec:wav_opt}
This subsection evaluates the performance of the proposed waveform optimization method. The AFDM block length is set to $N=256$, with chirp parameters $c_1=35/(2N)$ and $c_2=\sqrt{2}$. The benchmarks are the GPS algorithm~\cite{AFDM_GPS}, with $W=4$ candidates, and the selected mapping (SLM) scheme~\cite{PAPR_survey}, with $L=64$ phase sequences. Both are adapted to the same multi-objective model. The proposed PLIM-aided 8PSK-QPSK system harnesses $\{U, K_{\rm H}, M_{\rm H}, M_{\rm L}, \rho, \theta_{\rm p}\} = \{4, 2, 8, 4, 4, \pi/5\}$. Both BCD and GDP-BCD use $T=5$ and $\epsilon=10^{-10}$. Furthermore, for GDP-BCD, the group and candidate-pattern budgets are $K_G=2$ and $K_C=\lceil 0.35N_{\rm SI} \rceil = 23$, respectively.

Fig.~\ref{fig:wave_opt_comparison} compares the waveform optimization performance of the proposed phase-rotation design with several AFDM benchmarks. As shown by the PAPR complementary cumulative distribution function (CCDF) in Fig.~\ref{fig:wave_opt_comparison}(\subref{fig:wave_opt_papr_ccdf}), the conventional AFDM waveform achieves the highest PAPR, while the GPS and SLM schemes provide moderate PAPR reduction. The proposed BCD and GDP-BCD designs further reduce the PAPR. Specifically, $\lambda=1$ yields the largest PAPR reduction, whereas $\lambda=0.3$ trades a slight PAPR penalty for improved correlation sidelobe behavior. Fig.~\ref{fig:wave_opt_comparison}(\subref{fig:wave_opt_pacf}) further shows that the conventional AFDM waveform has high PACF sidelobes because its random data symbols are not optimized for the instantaneous correlation structure, while the suppression achieved by the GPS and SLM schemes is limited by their small codebook spaces. The proposed algorithms suppress most nonzero-lag sidelobes more effectively, with $\lambda=0$ yielding the strongest PACF suppression and $\lambda=0.3$ maintaining a low sidelobe floor while providing a meaningful PAPR reduction. This stronger PACF suppression suggests that the PAPR-reduction algorithms converge earlier than PACF-WISL minimization, while the nearly identical BCD and GDP-BCD results confirm that gradient-guided pruning preserves most of the waveform-shaping capability of exhaustive block evaluation.

Fig.~\ref{fig:wave_opt_comparison}(\subref{fig:wave_opt_pareto}) illustrates the tradeoff between the mean PAPR and the mean normalized PACF-WISL. The conventional AFDM point remains in the high-PAPR/high-WISL region, while both the GPS and SLM schemes exhibit relatively poor optimization performance. By varying $\lambda$, the proposed algorithms form a lower-left Pareto frontier, showing that the embedded phase-code DoF can balance PA efficiency and sensing sidelobe suppression. Moreover, GDP-BCD nearly overlaps with BCD on the Pareto front, indicating that most of the gains are retained while substantially fewer exact objective evaluations are required. Although the first-order gradient approximation in GDP-BCD makes it slightly less optimal and more prone to local optima, Figs.~\ref{fig:wave_opt_comparison}(\subref{fig:wave_opt_papr_ccdf}) and~\ref{fig:wave_opt_comparison}(\subref{fig:wave_opt_pacf}) show that the main PAPR and PACF sidelobes are suppressed in the early iterations, with later updates bringing only marginal refinements. Therefore, the performance loss is negligible.

\begin{figure}[!t]
\centering
\centering
\includegraphics[width=0.68\linewidth]{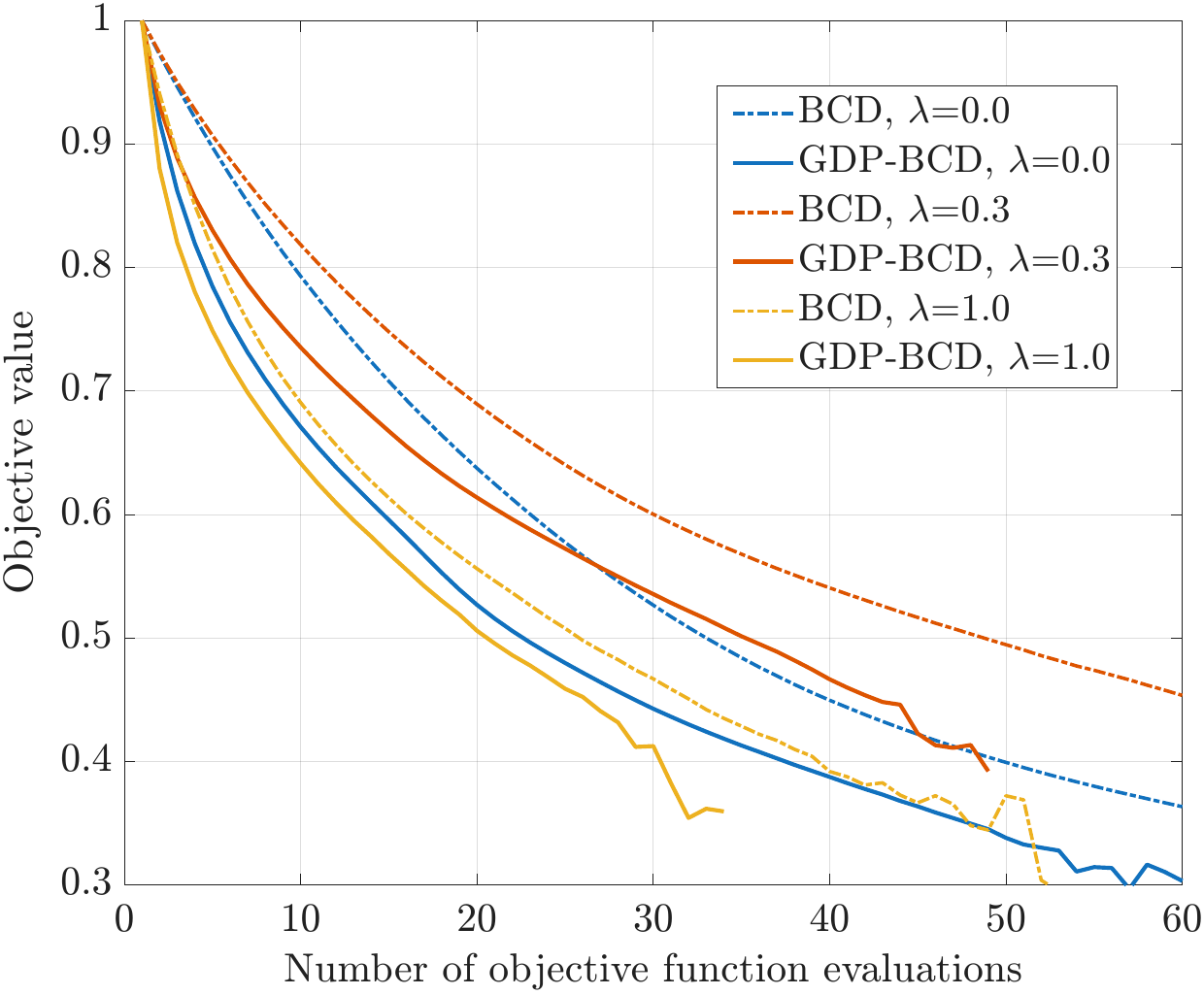}
\caption{Convergence trajectories of the objective function for BCD and GDP-BCD under different waveform-optimization weights $\lambda$.}
\captionsetup{font=small}
\label{fig:wave_opt_convergence}
\vspace{-2mm}
\end{figure}

\begin{table}[!t]
  \renewcommand{\arraystretch}{1.3}
  \captionsetup{font=footnotesize,labelfont=normalfont,textfont=sc,
    labelsep=newline,justification=centering,singlelinecheck=false}
  \caption{Average Runtime Comparison of BCD and GDP-BCD}
  \label{tab:wave_opt_runtime}
  \centering
  \footnotesize
  \begin{tabular}{|c|c|}
    \hline
    Algorithm & Average runtime (s) \\
    \hline
    BCD & 0.0545 \\
    \hline
    GDP-BCD & 0.0200 \\
    \hline
  \end{tabular}
  \vspace{-6mm}
\end{table}

Fig.~\ref{fig:wave_opt_convergence} further evaluates the convergence behavior of the proposed optimization algorithms. As shown in Fig.~\ref{fig:wave_opt_convergence}, the objective values of both BCD and GDP-BCD decrease rapidly in the initial iterations and then enter a slower refinement stage, indicating that the dominant phase-code updates can be identified within a limited number of iterations. For all tested values of $\lambda$, GDP-BCD follows the BCD descent trend and achieves comparable objective values, thereby verifying that the gradient-guided pruning strategy preserves the main optimization trajectory while avoiding unnecessary exact objective evaluations. The mixed-objective case $\lambda=0.3$ converges more slowly than the single-objective cases since the algorithm may jointly balance PAPR reduction and PACF-WISL suppression. Table~\ref{tab:wave_opt_runtime} further reports the average runtime. Compared with BCD, GDP-BCD reduces the average runtime from $0.0545$~s to $0.0200$~s, corresponding to a $63.3\%$ reduction. This substantial saving confirms the complexity advantage of the dual-pruned candidate screening mechanism.

\begin{figure}[!t]
\centering
\begin{subfigure}[t]{0.75\linewidth}
\centering
\includegraphics[width=\linewidth]{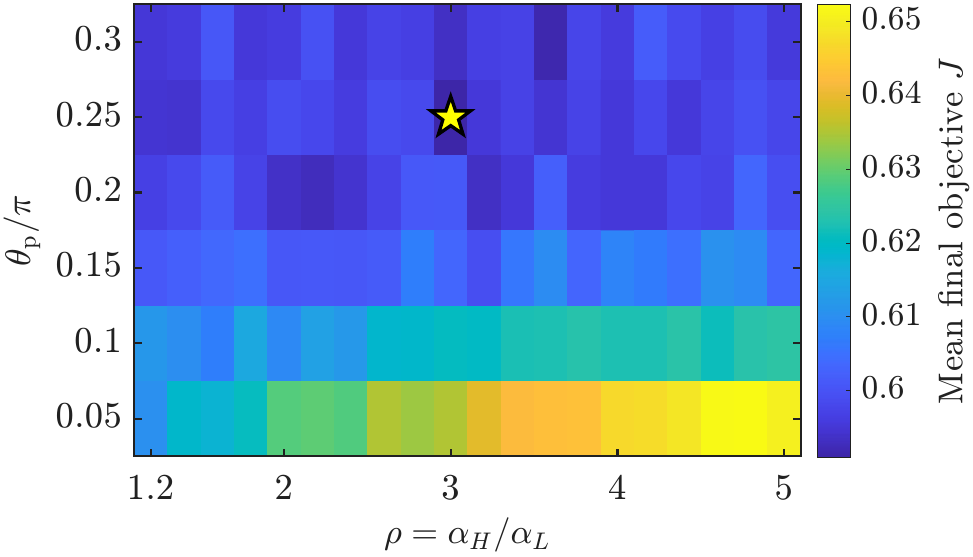}
\caption{Waveform optimization objective achieved by GDP-BCD over the power ratio $\rho$ and phase-rotation amplitude $\theta_{\rm p}$ with $\lambda=0.5$.}
\label{fig:objective_rho_sweep}
\end{subfigure}
\begin{subfigure}[t]{0.75\linewidth}
\centering
\includegraphics[width=\linewidth]{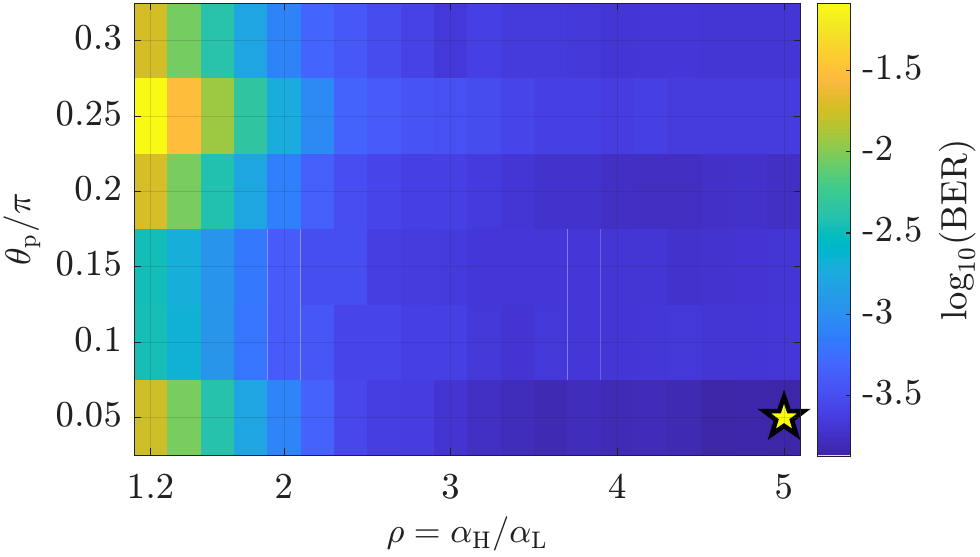}
\caption{BER over the power ratio $\rho$ and phase-rotation amplitude $\theta_{\rm p}$ in doubly dispersive channel at $E_b/N_0=15$~dB.}
\label{fig:ber_rho_sweep}
\end{subfigure}
\caption{Joint parameter sweep of the PLIM power ratio $\rho$ and phase-rotation amplitude $\theta_{\rm p}$, showing their effects on the optimized waveform objective and communication reliability.}
\captionsetup{font=small}
\label{fig:results_rho_theta_sweep}
\vspace{-6mm}
\end{figure}

\begin{figure}[!t]
\centering
\captionsetup[subfigure]{justification=centering,singlelinecheck=true}
\begin{subfigure}[t]{\linewidth}
\centering
\includegraphics[width=0.81\linewidth]{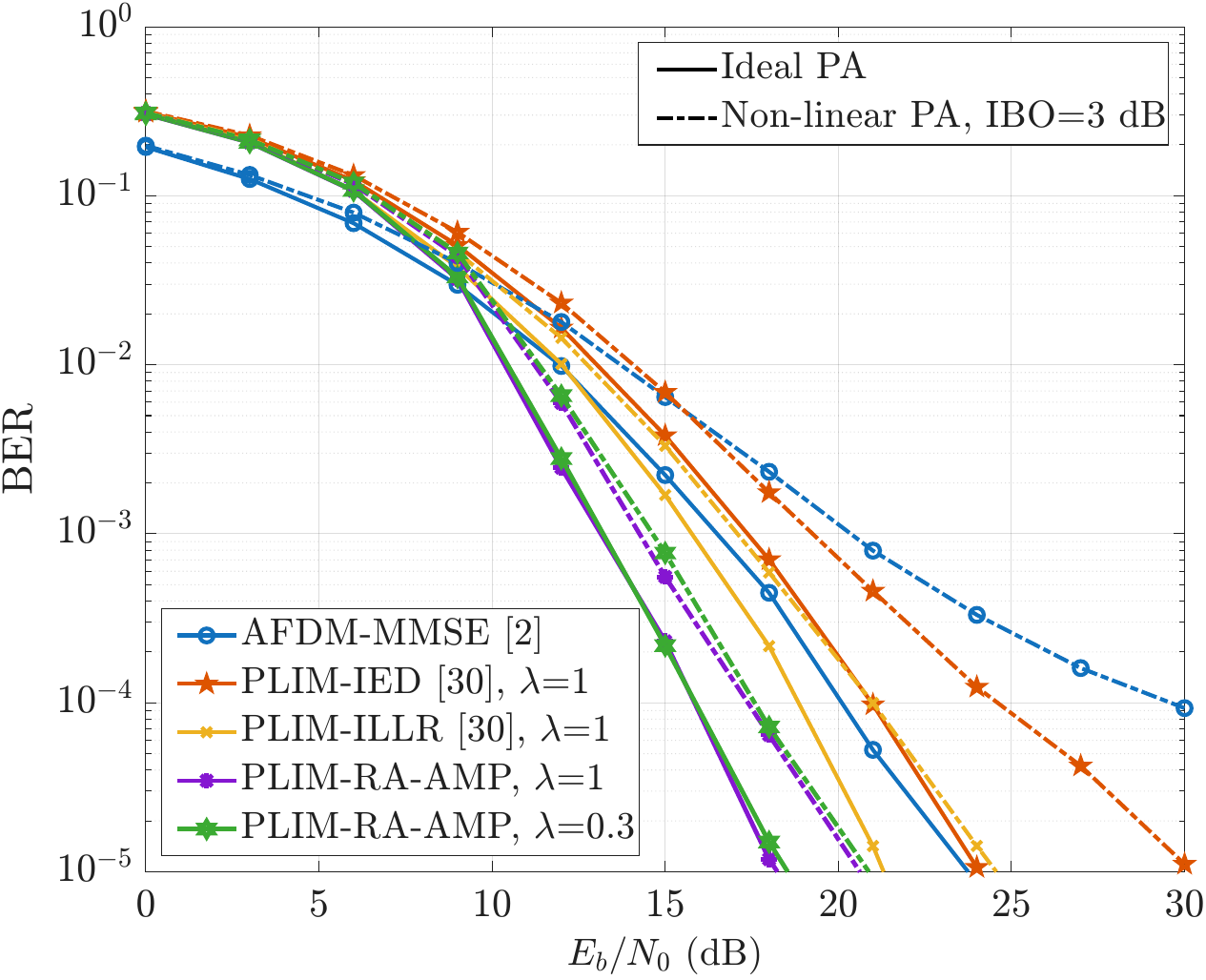}
\caption{Proposed 8PSK-QPSK configuration with $\rho=4$.}
\label{fig:ber_8psk_rho40_ibo}
\end{subfigure}
\vspace{0.5em}
\begin{subfigure}[t]{\linewidth}
\centering
\includegraphics[width=0.81\linewidth]{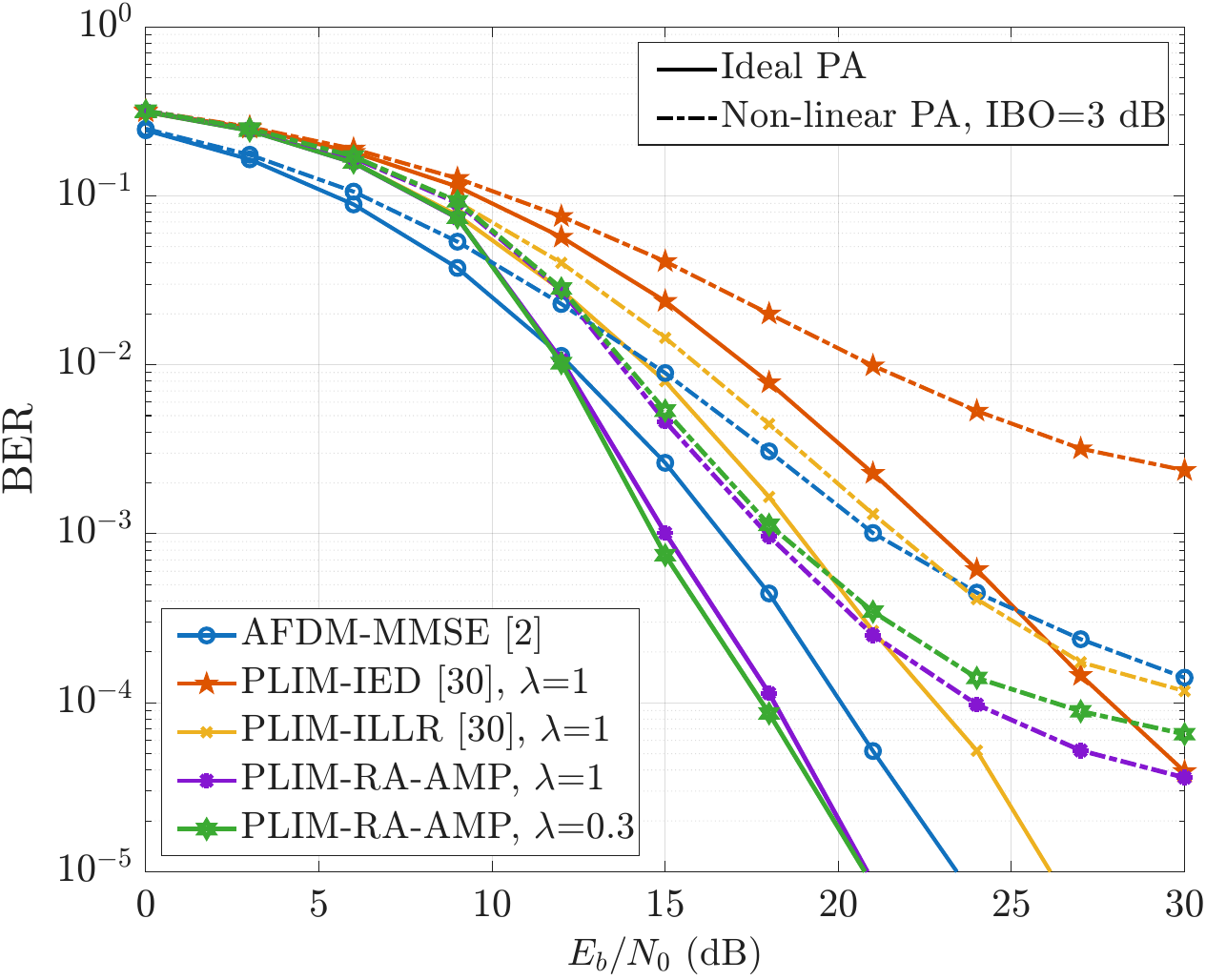}
\caption{Proposed 16QAM-QPSK configuration with $\rho=3$.}
\label{fig:ber_16qam_rho30_ibo}
\end{subfigure}
\caption{BER performance under doubly dispersive channels with ideal PA and nonlinear PA distortion at IBO $=3$ dB: (a) 8PSK-QPSK, and (b) 16QAM-QPSK.}
\captionsetup{font=small}
\label{fig:ber_nonlinear_pa}
\vspace{-6mm}
\end{figure}

Fig.~\ref{fig:results_rho_theta_sweep} illustrates the impact of the power ratio $\rho=\alpha_{\mathrm{H}}/\alpha_{\mathrm{L}}$ and the phase-rotation amplitude $\theta_{\rm p}$. Fig.~\ref{fig:results_rho_theta_sweep}(\subref{fig:objective_rho_sweep}) depicts the waveform-optimization objective with respect to $\rho$ and $\theta_{\rm p}$. It is observed that the optimization performance deteriorates when $\theta_{\rm p}$ is small, whereas it becomes relatively insensitive to variations in $\rho$ for large values of $\theta_{\rm p}$. Furthermore, Fig.~\ref{fig:results_rho_theta_sweep}(\subref{fig:ber_rho_sweep}) presents the distribution of the BER over doubly dispersive channel. Excessively small values of $\rho$ degrade the BER performance, whereas the BER remains resilient to $\theta_{\rm p}$. Therefore, a relatively large value of $\theta_{\rm p}$ and a moderately large value of $\rho$ are recommended for practical systems to achieve a favorable trade-off between waveform optimization effectiveness and communication reliability.
\vspace{-4mm}
\subsection{ISAC Performance Comparison}
This subsection further evaluates and validates the proposed waveform through simulations in a concrete ISAC scenario. The doubly dispersive channel is configured with $N=256$ subcarriers and $P=3$ propagation paths. The carrier frequency, subcarrier spacing, and relative velocity are set to $f_c=4$~GHz, $\Delta f=15$~kHz, and $150$~km/h, respectively. The baselines include conventional AFDM with MMSE equalization, as well as the improved energy-detection (IED) and improved log-likelihood-ratio (ILLR) detectors according to~\cite{DM_OFDM_CP} for the proposed PLIM-PR-AFDM system. To ensure a fair comparison, the conventional AFDM baseline is configured to achieve the same information rate as the proposed scheme.

Fig.~\ref{fig:ber_nonlinear_pa} compares the BER over doubly dispersive channels for the 8PSK-QPSK and 16QAM-QPSK configurations using an ideal PA and a nonlinear PA with the same configuration as that in Subsection~\ref{subsec:theoretical_abep}. Specifically, $\rho=4$ is used for the 8PSK-QPSK configuration, whereas $\rho=3$ is employed for the 16QAM-QPSK configuration. For the RA-AMP receiver, the algorithm parameters are set to $\eta_{\rm RA}=0.9$, $\lambda_d=0.65$, $T_{\max}=4$, and $c_r=0.20$, while the remaining waveform configurations are identical to those described in Subsection~\ref{subsec:wav_opt}.  As observed in Fig.~\ref{fig:ber_nonlinear_pa}(\subref{fig:ber_8psk_rho40_ibo}), with the 8PSK-QPSK configuration, the proposed scheme outperforms the baselines and conventional AFDM using both PA configurations. At a BER of $10^{-3}$, PLIM-PR-AFDM with RA-AMP achieves SNR gains of $4$~dB and $6$~dB using ideal and nonlinear PAs, respectively, over conventional AFDM, thereby demonstrating robustness against doubly dispersive channels and PA nonlinearity. Fig.~\ref{fig:ber_nonlinear_pa}(\subref{fig:ber_16qam_rho30_ibo}) presents the results using the 16QAM-QPSK setting, where the proposed scheme also maintains the best performance among all the scenarios. Although the performance gain over the baseline schemes decreases slightly compared with that in Fig.~\ref{fig:ber_nonlinear_pa}(\subref{fig:ber_8psk_rho40_ibo}), the proposed scheme still achieves SNR gains of approximately $1.5$~dB and $3$~dB at a BER of $10^{-3}$ using ideal and nonlinear PAs, respectively. Notably, the baselines, especially the ILLR detector, also achieve better BER performance than conventional AFDM under PA nonlinearity, which can be attributed to the low PAPR of the optimized waveform itself. By contrast, the IED detector exhibits degraded BER because it only exploits the high- and low-power information. Meanwhile, the proposed RA-AMP algorithm provides an additional performance gain of about $3$~dB over the ILLR detector. This advantage arises because the algorithm effectively mitigates the adverse effects of the non-constant-modulus properties through iterative denoising, thereby achieving superior BER performance.

\begin{figure}[!t]
\centering
\includegraphics[width=0.95\linewidth]{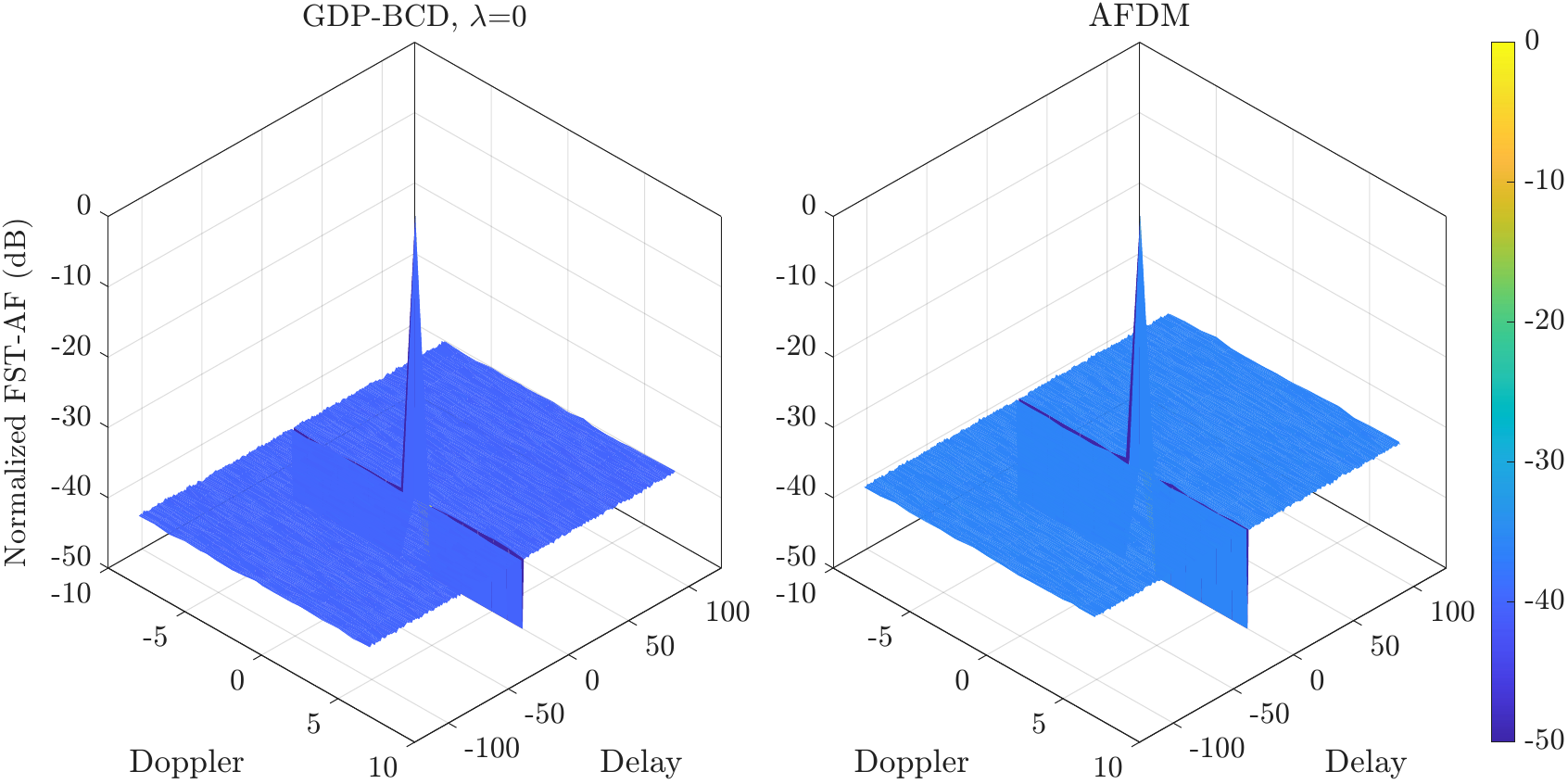}
\caption{Normalized FST-AF of the optimized 8PSK-QPSK AFDM waveform obtained by GDP-BCD with $\lambda=0$ and the conventional AFDM waveform, with $N=256$ and $M_{\rm s}=16$.}
\label{fig:fst_af_surface_8psk_n256_m16}
\vspace{-4mm}
\end{figure}

\begin{figure}[!t]
\centering
\captionsetup[subfigure]{justification=centering,singlelinecheck=true}
\begin{subfigure}[t]{\linewidth}
\centering
\includegraphics[width=0.75\linewidth]{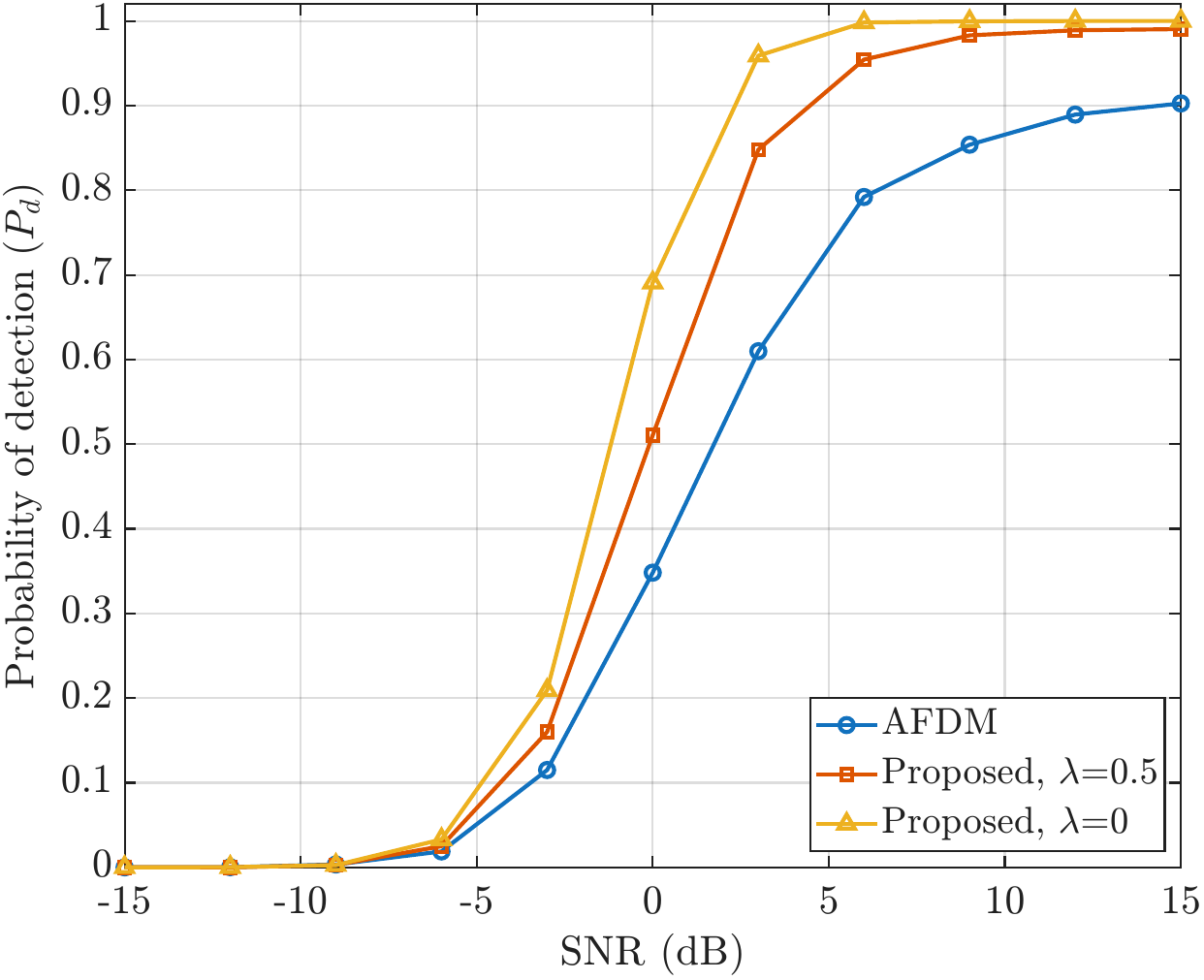}
\caption{CA-CFAR probability of detecting all targets versus SNR.}
\label{fig:sensing_pd_multi_target}
\end{subfigure}
\begin{subfigure}[t]{\linewidth}
\centering
\includegraphics[width=0.84\linewidth]{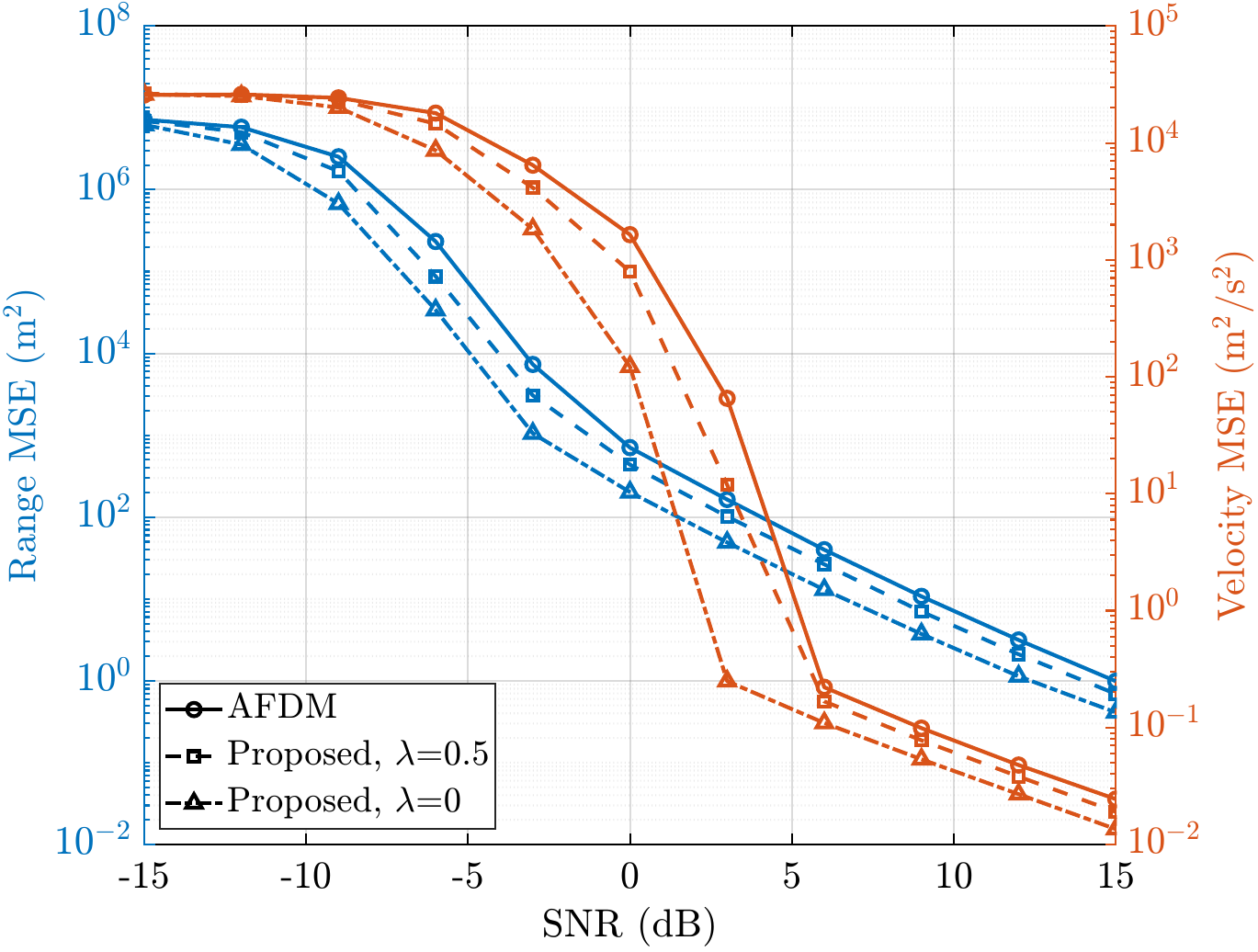}
\caption{TLS-ESPRIT range and velocity estimation MSE versus SNR.}
\label{fig:sensing_mse_tls_esprit}
\end{subfigure}
\caption{Sensing performance of the conventional AFDM waveform and the proposed optimized waveforms with $\lambda=0.5$ and $\lambda=0$: (a) probability of detecting all targets using CA-CFAR, and (b) range and velocity MSE using TLS-ESPRIT.}
\captionsetup{font=small}
\label{fig:sensing_mse_pd_multi_target}
\vspace{-6mm}
\end{figure}

Fig.~\ref{fig:fst_af_surface_8psk_n256_m16} compares the normalized FST-AFs of the sensing-oriented optimized waveform and the conventional AFDM waveform. Both waveforms preserve the mainlobe at $0$~dB, while the proposed GDP-BCD design suppresses the sidelobes. Specifically, the conventional AFDM waveform exhibits a broad sidelobe floor of about $-36$~dB, whereas the optimized waveform reduces most sidelobes to approximately $-40$~dB.

The sensing results in Fig.~\ref{fig:sensing_mse_pd_multi_target} further illustrate the benefits of the optimized waveform. The coherent processing interval contains $M_{\rm s}=16$ AFDM blocks, and the received echo is processed by a matched filter, followed by cell-averaging constant false alarm rate (CA-CFAR) detection and total least squares estimation of signal parameters through rotational invariance techniques (TLS-ESPRIT)-based range/velocity estimation~\cite{Signal_process}. The CA-CFAR false alarm rate is $10^{-6}$, with 4 training cells and 1 guard cell. For the considered case with $Q=3$ targets, the delay bins are set to $\{3,9,14\}$ and the Doppler bins to $\{3,-4,6\}$. The reflection coefficients are $\boldsymbol{\beta}=[1,10^{-1/2},10^{-1}]$. As shown in Fig.~\ref{fig:sensing_mse_pd_multi_target}(\subref{fig:sensing_pd_multi_target}), the proposed waveforms achieve about $3$--$6$~dB SNR gain in probability of detection $P_{\rm d}$ over conventional AFDM under different weighting factors. Moreover, Fig.~\ref{fig:sensing_mse_pd_multi_target}(\subref{fig:sensing_mse_tls_esprit}) presents the single-target TLS-ESPRIT parameter-estimation performance. It can be observed that the proposed method yields about $2$--$3$~dB SNR gain under different weights. These gains result from FST-AF sidelobe suppression, which alleviates inter-target interference and improves sensing performance.
\vspace{-3mm}
\section{Conclusion}
This paper proposed a PLIM-PR-AFDM framework for ISAC, where each PLIM power pattern conveys index bits. Explicitly, high-power subcarriers provide phase-code DoF for frame-wise waveform shaping, and selected low-power symbols embed the phase bits as SI without an external channel. By relating the FST-AF sidelobe objective to per-block PACF-WISL minimization, the proposed scheme jointly optimizes the PAPR and the sidelobe levels of the FST-AF through the proposed BCD/GDP-BCD algorithms. The BER upper-bound and full-diversity condition were derived for ideal and nonlinear-PA cases, and an efficient RA-AMP receiver was developed for joint detection of the PLIM power pattern, phase bits, and data symbol bits. Numerical results verified the correctness of the theoretical error analysis and demonstrated favorable PAPR-versus-sidelobe tradeoffs. Furthermore, our RA-AMP detector is capable of achieving good BER performance under doubly dispersive channels and PA nonlinearity, and PLIM-PR-AFDM may attain improved target detection and estimation accuracy.

\bibliographystyle{IEEEtran}
\bibliography{references, additional_references}

\end{document}